\title{Verified Parallel String Matching in Haskell}
\author{Niki Vazou\inst{1,2} \and Jeff Polakow\inst{2}}
\institute{$^1$UC San Diego \quad $^2$Awake Networks}
\newcommand\stringMempty{\ensuremath{\eta}}
\newcommand\stringMappend{\ensuremath{\boxdot}}
\newcommand\mempty{\ensuremath{\epsilon}}
\newcommand\mappend{\ensuremath{\diamondsuit}}
\newcommand\tx{\ensuremath{\texttt{x}}\xspace}
\newcommand\ty{\ensuremath{\texttt{y}}\xspace}
\newcommand\txs{\ensuremath{\texttt{xs}}\xspace}
\newtheorem{assumption}[theorem]{Assumption}
	 \definecolor{haskellblue}{rgb}{0.0, 0.0, 1.0}
	 \definecolor{haskellstr}{rgb}{0.2, 0.2, 0.6}
	 \definecolor{haskellred}{rgb}{1.0, 0.0, 0.0}
  \definecolor{gray_ulisses}{gray}{0.55}
  \definecolor{castanho_ulisses}{rgb}{0.71,0.33,0.14}
  \definecolor{preto_ulisses}{rgb}{0.41,0.20,0.04}
  \definecolor{green_ulises}{rgb}{0.2,0.75,0}
	\definecolor{haskellblue}{gray}{0.1}
	\definecolor{haskellstr}{gray}{0.1}
	\definecolor{haskellred}{gray}{0.1}
	\definecolor{gray_ulisses}{gray}{0.1}
	\definecolor{castanho_ulisses}{gray}{0.1}
	\definecolor{preto_ulisses}{gray}{0.1}
	\definecolor{green_ulisses}{gray}{0.1}
\definecolor{lcolor}{gray}{0.0}
\definecolor{lappcolor}{gray}{0.0}
\definecolor{lappascolor}{gray}{0.0}
\def\codesize{\normalsize}
\newcommand\showfocus[1]{\color{purple}{\textbf{#1}}}
\lstdefinelanguage{HaskellUlisses} {
	basicstyle=\ttfamily\codesize,
	moredelim=[is][\showfocus]{\#}{\#},
	sensitive=true,
	morecomment=[l][\color{gray_ulisses}\ttfamily\itshape\codesize]{--},
	morecomment=[s][\color{gray_ulisses}\ttfamily\itshape\codesize]{\{-}{-\}},
	morestring=[b]",
	stringstyle=\color{haskellstr},
	showstringspaces=false,
	numberstyle=\codesize,
	numberblanklines=true,
	showspaces=false,
	breaklines=true,
	showtabs=false,
  literate={
           {`}{{{$^{\backprime}{}$}}}1
           {'}{{{$^{\prime}{}$}}}1
           {?}{{{$\therefore$}}}1
           {<=}{{$\leq$}}1
           {theta}{{$\theta$}}1
           {rf}{{{\color{lappcolor}f}}}1
           {gf}{{{\color{lappascolor}f}}}1
           {rmap}{{{\color{lappcolor}map}}}3
           {gmap}{{{\color{lappascolor}map}}}3
           {env}{{$\Gamma$}}1
           {|-}{{$\vdash$}}1
           {<=!}{{{\color{lcolor}<=!}}}3
           {!=}{{$\neq$}}1
           {forall}{{$\forall$}}1
           {->}{{$\rightarrow$}}1
           {=*}{{$\eqfun$}}2
           {<=>}{{$\Leftrightarrow$}}3
           {=>}{{$\Rightarrow$}}2
           {<:}{{$\preceq$}}1
           {mempty}{{$\mempty$}}1
           {mappend}{{$\mappend$}}1
           {<>}{{$\mappend$}}1
           {stringMempty}{{$\stringMempty$}}1
           {<+>}{{$\stringMappend$}}1
           {stringMappend}{{$\stringMappend$}}1
           {listMempty}{{[]}}1
           {listMappend}{{++}}2
           {epsilon}{{$\epsilon$}}1
           {eta}{{$\eta$}}1
           {&&&}{&&&}3
           {&&}{{$\land$}}1
           {_m}{{${}_m$}}1
           {_n}{{${}_n$}}1
           {m^+}{{m${}^{+}$}}2
           },
	emph=
	{[1]
		FilePath,IOError,abs,acos,acosh,all,and,any,appendFile,approxRational,asTypeOf,asin,
		asinh,atan,atan2,atanh,basicIORun,break,catch,ceiling,chr,compare,concat,concatMap,
		const,cos,cosh,curry,cycle,decodeFloat,denominator,digitToInt,div,divMod,drop,
		dropWhile,either,elem,encodeFloat,enumFrom,enumFromThen,enumFromThenTo,enumFromTo,
		error,even,exp,exponent,fail,mapMaybe,filter,flip,floatDigits,floatRadix,floatRange,floor,
		fmap,foldl,foldl1,foldr,foldr1,fromDouble,fromEnum,fromInt,fromInteger,fromIntegral,
		fromRational,fst,gcd,getChar,getContents,getLine,head,id,inRange,index,init,intToDigit,
		interact,ioError,isAlpha,isAlphaNum,isAscii,isControl,isDenormalized,isDigit,isHexDigit,
		isIEEE,isInfinite,isLower,isNaN,isNegativeZero,isOctDigit,isPrint,isSpace,isUpper,iterate,
		last,lcm,length,lex,lexDigits,lexLitChar,lines,log,logBase,lookup,map,mapM,mapM_,max,
		maxBound,posMax,negMax,maximum,maybe,min,minBound,minimum,mod,negate,not,notElem,null,numerator,odd,
		or,ord,pi,pred,primExitWith,print,product,properFraction,putChar,putStr,putStrLn,quot,
		quotRem,range,rangeSize,read,readDec,readFile,readFloat,readHex,readIO,readInt,readList,readLitChar,
		readLn,readOct,readParen,readSigned,reads,readsPrec,realToFrac,recip,rem,repeat,replicate,return,
		reverse,round,scaleFloat,scanl,scanl1,scanr,scanr1,seq,sequence,sequence_,show,showChar,showInt,
		showList,showLitChar,showParen,showSigned,showString,shows,showsPrec,significand,signum,sin,
		sinh,snd,span,splitAt,sqrt,subtract,succ,sum,tail,take,takeWhile,tan,tanh,threadToIOResult,toEnum,
		toInt,toInteger,toLower,toRational,toUpper,truncate,uncurry,undefined,unlines,until,unwords,unzip,
		unzip3,userError,words,writeFile,zip,zip3,zipWith,zipWith3,listArray,doParse,empty,for,initTo,
        assert,compose,checkGE,maxEvens,empty,create,get,set,initialize,idVec,fastFib,fibMemo,
        ex1,ex2,ex3,incr,inc,dec,isPos,positives,find,insert,len,size,union,fromList,initUpto,trim,
        insertSort,decsort,qsort,reverse,append,upperCase, ifM, whileM, get, decrM, diff,
        project, select, leq, elts, keys, dkeys, dfun, addKey, pTrue, emptyRD, rFalse,
        	dom, rng, isI, isD, isS, movie1, movie2,  toI, toS, toD, good_titles, runState, ret,
        	update, getCtr, setCtr, ctr, rdCtr, wrCtr, ifTest, whileTest, posCtr, zeroCtr, decr, decCtr,
        	pread , pwrite , plookup , pcontents, pcreateF , pcreateFP, pcreateD, active, caps, pset, eqP,
        	write, contents, alloc, derivP, copyP, createDir, store, copyRec, copySpec,
        	forM_, when, flookup, fread, createDir, pcreateFile, isFile, copyFrame, ?
	},
	emphstyle={[1]\color{haskellblue}},
	emph=
	{[2] 	Show,Eq,Ord,Num,UpClosed,Comp,Wit,Witness,Inductive,Meet,Flip,TRUE,Nat,Pos,Neg,IntGE,Plus,List,
        Bool,Char,Double,Either,Float,IO,Integer,Int,Maybe,
        Ordering,Rational,Ratio,ReadS,ShowS,String,Word8,
        InPacket,Tree,Vec,NullTerm,IncrList,DecrList,
        UniqList,BST,MinHeap,MaxHeap,World,RIO,IO,HIO,Post,Pre,
        Privilege, Prop, Chain, ChainTy, Range, Dict, RD, Dom, Set, P, Univ, Schema, MovieSchema, RT,
        TDom, TRange, MoviesTable, RTSubEqFlds, RTEqFlds, Disjoint, Union, Ret, Seq, Trans, Map,
        Pure, Then, Else, Exit, Inv, OneState, Priv, Path, FH, Stable,
				Prop, Nat,
	},
	emphstyle={[2]\color{castanho_ulisses}},
	emph=
	{[3]
		case,class,data,deriving,do,else,if,import,in,infixl,infixr,instance,let,
		module,of,primitive,then,refinement,type,where,forall,bound,
		measure,reflect,predicate
	},
	emphstyle={[3]\color{preto_ulisses}\textbf},
	emph=
	{[4]
		quot,rem,div,mod,elem,notElem,seq
	},
	emphstyle={[4]\color{castanho_ulisses}\textbf},
	emph=
	{[5]
		EQ,GT,LT,Left,Right
	},
	emphstyle={[5]\color{preto_ulisses}\textbf},
	emph=
	{[6]
	    axiomatize, measure, inline
	},
	emphstyle={[6]\color{lcolor}}
}
\lstdefinelanguage{Pseudo} {
	basicstyle=\ttfamily\codesize,
	sensitive=true,
  mathescape=true,
	morecomment=[l][\color{gray_ulisses}\ttfamily\codesize]{--},
	morecomment=[s][\color{gray_ulisses}\ttfamily\codesize]{\{-}{-\}},
	morestring=[b]",
	showstringspaces=false,
	numberstyle=\codesize,
	numberblanklines=true,
	showspaces=false,
	breaklines=true,
	showtabs=false
}
\begin{document}
\maketitle


\begin{abstract}
In this paper, we prove correctness of parallelizing a string matcher
using Haskell as a theorem prover.
We use refinement types to specify correctness properties,
Haskell terms to express proofs and Liquid Haskell to
check correctness of proofs.
First, we specify and prove that a class of monoid morphisms
can be parallelized via parallel monoid concatenation.
Then, we encode string matching as a morphism
to get a provably correct parallel transformation.
Our 1839LoC prototype proof shows that
Liquid Haskell can be used as a fully expressive
theorem prover on realistic Haskell implementations.

\end{abstract}

\section{Introduction}\label{sec:intro}
In this paper, we prove correctness of parallelization of a na\"ive string matcher
using Haskell as a theorem prover.
We use refinement types to specify correctness properties,
Haskell terms to express proofs and Liquid Haskell to
check correctness of proofs.

Optimization of sequential functions via parallelization
is a well studied technique~\cite{jaja,blelloch}.
Paper and pencil proofs of program have been developed to support the
correctness of the transformation~\cite{Cole93parallelprogramming}.
However, these paper and pencil proofs
show correctness of the parallelization algorithm
and do not reason about the actual implementation
that may end up being buggy.

Dependent Type Systems (like Coq~\cite{coq-book} and Adga~\cite{agda} )
enable program equivalence proofs
for the actual implementation of the functions to be parallelized.
For example, SyDPaCC~\cite{SyDPaCC} is a Coq extension that
given a na\"ive Coq implementation of a function,
returns an Ocaml parallelized version with a proof of program equivalence.
The limitation of this approach is that the initial
function should be implemented in
the specific dependent type framework
and thus cannot use features and libraries from one's favorite
programming language.

Refinement Types~\cite{ConstableS87,FreemanPfenningDONTCITE91,Rushby98}
on the other hand, enable verification of existing general purpose languages
(including
ML~\cite{pfenningxi98,GordonRefinement09,LiquidPLDI08},
C~\cite{deputy,LiquidPOPL10},
Haskell~\cite{Vazou14},
Racket~\cite{RefinedRacket}
and Scala~\cite{refinedscala}).
Traditionally, refinement types are limited to
``shallow'' specifications,
that is, they are used to specify and verify properties
that only talk about behaviors of program functions
but not functions themselves.
This restriction critically limits the expressiveness
of the specifications
but allows for automatic SMT~\cite{SMTLIB2} based verification.
Yet, program equivalence proofs were out of the expressive
power of refinement types.

Recently, we extended refinement types
with Refinement Reflection~\cite{reflection},
a technique that reflects each function's implementation
into the function's type
and is implemented in Liquid Haskell~\cite{Vazou14}.
We claimed that Refinement Reflection can turn any programming
language into a proof assistant.
In this paper we check our claim and use Liquid Haskell
to prove program equivalence.
Specifically,
we \textit{define in Haskell}
a sequential string matching function, @toSM@, and its parallelization, @toSMPar@,
using existing Haskell libraries; then,
we \textit{prove in Haskell} that these two functions are equivalent,
and we check our proofs using Liquid Haskell.

\paragraph{Theorems as Refinement Types}
Refinement Types refine types with properties
drawn from decidable logics.
For example, the type @{v:Int | 0 < v}@
describes all integer values @v@ that are greater than @0@.
We refine the unit type to express theorems,
define unit value terms to express proofs, and use
Liquid Haskell to check that the proofs prove the theorems.
For example, Liquid Haskell accepts
the type assignment @() :: {v:()| 1+1=2}@,
as the underlying SMT can always prove the equality @1+1=2@.
We write @{1+1=2}@ to simplify the type @{v:()| 1+1=2}@
from the irrelevant binder @v:()@.

\paragraph{Program Properties as Types}
The theorems we express can refer to program functions.
As an example, the type of @assoc@ expresses that @mappend@
is associative.
\begin{code}
assoc :: x:m -> y:m -> z:m -> {x mappend (y mappend z) = (x mappend y) mappend z}
\end{code}
In \S~\ref{sec:haskell-proofs} we explain
how to write Haskell proof terms to prove theorems like @assoc@
by proving that list append @(++)@ is associative.
Moreover, we prove that the empty list @[]@ is the identity element of
list append, and conclude that the list
(with @[]@ and @(++)@, \ie the triple (@[a]@, @[]@, @(++)@))
is provably a monoid.

\paragraph{Corectness of Parallelization}
In \S~\ref{sec:parallelization}, we define the type @Morphism n m f@ that specifies
that @f@ is a \textit{morphism} between two monoids
(@n@, @$\eta$@, @<+>@) and (@m@, @$\epsilon$@, @<>@),
\ie @f :: n -> m@ where @f $\eta$ = $\epsilon$@ and @f (x <+> y) = f x <> f y@.

A morphism @f@ on a ``chunkable'' input type can be parallelized by:
\begin{enumerate}
  \item chunking up the input in @j@ chunks (@chunk j@),
  \item applying the morphism in parallel to all chunks (@pmap f@), and
  \item recombining the mapped chunks via @mappend@, also in parallel (@pmconcat i@).
\end{enumerate}
We specify correctness of the above transformation as a refinement type.
\begin{code}
parallelismEquivalence
  :: f:(n -> m) -> Morphism n m f -> x:n -> i:Pos -> j:Pos
   -> {f x = pmconcat i (pmap f (chunk j x))}
\end{code}
\S~\ref{sec:parallelization} describes the parallelization transformation in details,
while Correctness of Parallelization Theorem~\ref{theorem:two-level} proves correctness
by providing a
term that satisfies the above type.

\paragraph{Case Study: Parallelization of String Matching}
We use the above theorem to parallelize string matching.
We define a string matching function @toSM :: RString -> toSM target@
from a \textit{refined string} to a \textit{string matcher}.
A refined string (\S~\ref{subsec:refinedstrings}) is a wrapper around
the efficient string manipulation library
@ByteString@ that moreover assumes
various string properties, including the monoid laws.
A string matcher @SM target@ (\S~\ref{subsec:stringmatcher}) is a data type that contains
a refined string and a list of all the indices
where the type level symbol @target@ appears in the input.
We prove that @SM target@ is a monoid and @toSM@ is a morphism,
thus by the aforementioned Correctness of Parallelization Theorem~\ref{theorem:two-level}
we can correctly parallelize string matching.

To sum up, we present the first realistic proof that 
uses Haskell as a theorem prover:
correctness of parallelization on string matching.
Our contributions are summarized as follows
\begin{itemize}
\item We explain how theorems and proofs are encoded and checked in Liquid Haskell
by formalizing monoids and proving that lists are monoids
(\S~\ref{sec:haskell-proofs}).
\item We formalize morphisms between monoids and
specify and prove correctness of parallelization of morphisms
(\S~\ref{sec:parallelization}).
\item We show how libraries can be imported as trusted components by wrapping
@ByteString@s as refined strings which satisfy the monoid laws (\S~\ref{subsec:refinedstrings}).
\item As an application, we prove that a string matcher is a morphism between the monoids of refined strings
and string matchers,  thus we get provably correct parallelization of string matching (\S~\ref{sec:stringmatching}).
\item Based on our 1839LoC proof we evaluate the approach of using Haskell as a theorem prover
(\S~\ref{sec:evaluation}).
\end{itemize}

\section{Proofs as Haskell Functions}\label{sec:haskell-proofs}

Refinement Reflection~\cite{reflection} is a technique
that lets you write Haskell functions that prove theorems
about other Haskell functions and have your proofs machine-checked
by Liquid Haskell~\cite{Vazou14}.
As an introduction to Refinement Reflection,
in this section, we prove that lists are monoids by
\begin{itemize}
\item \textit{specifying monoid laws} as refinement types,
\item \textit{proving the laws} by writing the implementation of the law specifications, and
\item \textit{verifying the proofs} using Liquid Haskell.
\end{itemize}

\subsection{Reflection of data types into logic.}
To start with,
we define a List data structure and
teach Liquid Haskell basic properties about List,
namely, how to check that proofs on lists are \textit{total}
and how to encode functions on List into the logic.

The data list definition @L@ is the standard recursive definition.
\begin{code}
data L [length] a = N | C a (L a)
\end{code}
With the @length@ annotation in the definition Liquid Haskell
will use the @length@ function to check termination
of functions recursive on Lists.
We define @length@ as the standard Haskell function
that returns natural numbers.
We lift @length@ into logic as a \textit{measure}~\cite{Vazou14},
that is, a \textit{unary} function whose (1) domain is the data type, and
(2) body is a single case-expression over the datatype.
\begin{code}
type Nat = {v:Int | 0 <= v}

measure length
length         :: L a -> Nat
length N        = 0
length (C x xs) = 1 + length xs
\end{code}

Finally, we teach Liquid Haskell how to encode functions on Lists
into logic.
The flag @"--exact-data-cons"@
automatically derives measures which
(1) test if a value has a given data constructor, and
(2) extract the corresponding field's value.
For example, Liquid Haskell will automatically derive the following
List manipulation measures from the List definition.
\begin{code}
isN :: L a -> Bool    -- Haskell's null
isC :: L a -> Bool    -- Haskell's not . null

selC1 :: L a -> a     -- Haskell's head
selC2 :: L a -> L a   -- Haskell's tail
\end{code}
Next, we describe how Liquid Haskell uses the above measures
to automatically reflect Haskell functions on Lists into logic.

\subsection{Reflection of Haskell functions into logic.}
Next, we define and reflect into logic the two monoid operators on Lists.
Namely, the identity element @mempty@ (which is the empty list)
and an associative operator @(mappend)@ (which is list append).
\begin{code}
reflect mempty
mempty :: L a
mempty = N

reflect (mappend)
(mappend) :: L a -> L a -> L a
N        mappend ys = ys
(C x xs) mappend ys = C x (xs mappend ys)
\end{code}

The reflect annotations lift the Haskell functions into logic in three steps.
First, check that the Haskell functions indeed terminate by checking
that the @length@ of the input list is decreasing,
as specified in the data list definition.
Second, in the logic, they define the respective uninterpreted functions
@mempty@ and @(mappend)@.
Finally, the Haskell functions and the logical uninterpreted functions
are related by strengthening the result type of the Haskell function
with the definition of the function's implementation.
For example, with the above @reflect@ annotations,
Liquid Haskell will \textit{automatically} derive the following strengthened
types for the relevant functions.
\begin{code}
mempty  :: {v:L a | v = mempty && v = N }

(mappend):: xs:L a -> ys:L a
    ->  {v:L a | v = xs mappend ys
              && v = if isN xs then ys
                     else C (selC1 xs) (selC2 xs mappend ys)
        }
\end{code}

\subsection{Specification and Verification of Monoid Laws}

Now we are ready to specify the monoid laws as refinement types and
provide their respective proofs as terms of those type. Liquid Haskell
will verify that our proofs are valid. 
Note that this is exactly what
one would do in any standard logical framework,
like LF~\cite{Harper93}.

The type @Proof@ is defined as an alias of the unit type (@()@)
in the library @ProofCombinators@ that comes with Liquid Haskell.
Figure~\ref{figure:proofcombinators} summarizes the definitions we use from @ProofCombinators@.
We express theorems as refinement types by refining
the @Proof@ type with appropriate refinements.
For example, the following theorem states
the @mempty@ is always equal to itself.
\begin{code}
trivial :: {mempty = mempty}
\end{code}
Where @{mempty = mempty}@ is a simplification for the @Proof@ type
@{v:Proof | mempty = mempty}@, since the binder @v@ is irrelevant, and
@trivial@ is defined in @ProofCombinators@ to be unit.
Liquid Haskell will typecheck the above code using an SMT
solver to check congruence on @mempty@.
%

\begin{definition}[Monoid] \label{definition:monoid}
The triple (@m@, @epsilon@, @<>@) is a monoid
(with identity element @epsilon@ and associative operator @<>@),
if the following functions are defined. 
\begin{code}
idLeft_m :: x:m -> {mempty mappend x = x}
idRight_m :: x:m -> {x mappend mempty = x}
assoc_m :: x:m -> y:m -> z:m -> {x mappend (y mappend z) = (x mappend y) mappend z}
\end{code}
\end{definition}
Using the above definition, we prove that our list type @L@ is a monoid
by defining Haskell proof terms that satisfy the above monoid laws. 

\paragraph{Left Identity} is expressed
as a refinement type signature that takes as input
a list @x:L a@ and returns a @Proof@ type refined
with the property @mempty <> x = x@
\begin{code}
idLeft :: x:L a -> { mempty mappend x = x }
idLeft x = empty <> x ==. N <> x ==. x *** QED
\end{code}
We prove left identity using combinators from @ProofCombinators@ as
defined in Figure~\ref{figure:proofcombinators}.
We start from the left hand side @empty <> x@,
which is equal to @N <> x@ by calling @empty@ thus
unfolding the equality @empty = N@ into the logic.
Next, the call @N <> x@ unfolds into the logic the definition of @(<>)@
on @N@ and @x@, which is equal to @x@, concluding our proof.
Finally, we use the operators @p *** QED@ which basically casts @p@ into a proof term.
In short, the proof of left identity, proceeds by unfolding the definitions of @mempty@
and @(<>)@ on the empty list.

\begin{figure}[t]
\begin{code}
type Proof = ()
data QED   = QED

trivial :: Proof
trivial = ()

(==.) :: x:a -> y:{a | x = y} -> {v:a | v = x}
x ==. _ = x

(***) :: a -> QED -> Proof
_ *** _ = ()

(?) :: (Proof -> a) -> Proof -> a
f ? y = f y
\end{code}
\caption{Operators and Types defined in \texttt{ProofCombinators}}
\label{figure:proofcombinators}
\end{figure}

\paragraph{Right identity} is proved by structural induction.
We encode inductive proofs by case splitting on the base and inductive case,
and enforcing the inductive hypothesis via a recursive call.
\begin{code}
idRight :: x:L a -> { x <> mempty = x }
idRight N = N <> empty ==. N *** QED

idRight (C x xs)
   =   (C x xs) <> empty
   ==. C x (xs <> empty)
   ==. C x xs ? idRight xs
   *** QED
\end{code}
The recursive call @idRight xs@ is provided
as a third optional argument in the @(==.)@
operator to justify the equality @xs <> empty = xs@,
while the operator @(?)@ is merely a function application
with the appropriate precedence.
Note that LiquiHaskell, via termination and totality checking,
is verifying that all the proof terms are well formed because
(1) the inductive hypothesis is only applying to smaller terms, and
(2) all cases are covered.

\paragraph{Associativity} is proved in a very similar manner,
using structural induction.
\begin{code}
assoc   :: x:L a -> y:L a -> z:L a
        -> { x mappend (y mappend z) = (x mappend y) mappend z}
assoc N y z
  =   N <> (y <> z)
  ==. y <> z
  ==. (N <> y) <> z
  *** QED

assoc (C x xs) y z
  =  (C x xs) <> (y <> z)
  ==. C x (xs <> (y <> z))
  ==. C x ((xs <> y) <> z) ? associativity xs y z
  ==. (C x (xs <> y)) <> z
  ==. ((C x xs) <> y) <> z
  *** QED
 \end{code}
As with the left identity, the proof proceeds by
(1) function unfolding (or rewriting in paper and pencil proof terms),
(2) case splitting (or case analysis), and
(3) recursion (or induction).

Since our list implementation satisfies the three monoid laws
we can conclude that @L a@ is a monoid.
%


\begin{theorem}\label{theorem:monoid:list}
(@L a@, @epsilon@, @<>@) is a monoid.
\end{theorem}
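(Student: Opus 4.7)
The statement is essentially a corollary of Definition~\ref{definition:monoid} together with the three lemmas that have just been proved in the subsection, namely \texttt{idLeft}, \texttt{idRight}, and \texttt{assoc}. The plan is therefore not to construct any new proof term but to assemble the three existing witnesses and observe that their refined types are exactly the three obligations required by the monoid definition, instantiated at \texttt{m = L a}, \texttt{epsilon = mempty}, and \texttt{<> = mappend}.

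Concretely, I would proceed in three short steps. First, unfold Definition~\ref{definition:monoid} at the chosen instantiation to produce the three required type signatures. Second, match each against the type of the corresponding Haskell proof term already verified by Liquid Haskell: \texttt{idLeft} against \texttt{idLeft\_m}, \texttt{idRight} against \texttt{idRight\_m}, and \texttt{assoc} against \texttt{assoc\_m}. Third, conclude by definition of monoid. Because Liquid Haskell has machine-checked the types of the three proof terms, no further semantic reasoning is needed; the theorem is discharged by a purely structural argument.

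The main (minor) obstacle is purely a bookkeeping one: confirming that the logical symbols appearing on both sides of the match are the same. The monoid laws in Definition~\ref{definition:monoid} are stated generically in terms of an abstract \texttt{epsilon} and \texttt{<>}, while the three lemmas mention the list-specific \texttt{mempty} and \texttt{mappend}. One must therefore verify that the \texttt{reflect} annotations on \texttt{mempty} and \texttt{mappend} really do introduce the uninterpreted symbols that inhabit the refinement types of the lemmas, so that the instantiation is literally an alpha-renaming. This is guaranteed by the reflection machinery described in the preceding subsection, which strengthens the Haskell definitions with their defining equations in the logic, so the SMT solver discharges the equalities automatically.

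Hence the theorem requires no further proof beyond citing the three previously verified lemmas and appealing to Definition~\ref{definition:monoid}; in the Liquid Haskell development it amounts to packaging \texttt{(L a, mempty, mappend)} together with the three witnesses as a monoid instance.
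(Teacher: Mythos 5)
Your proposal matches the paper's proof exactly: the paper likewise discharges the theorem by observing that the already-verified terms \texttt{idLeft}, \texttt{idRight}, and \texttt{assoc} inhabit the types \texttt{idLeft\_m}, \texttt{idRight\_m}, and \texttt{assoc\_m} of Definition~\ref{definition:monoid} instantiated at \texttt{m = L a}. Your extra remarks about the reflection bookkeeping are correct but not needed beyond what the paper states.
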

\begin{proof}
@L a@ is a monoid, as the implementation of
@idLeft@, @idRight@, and @assoc@
satisfy the specifications of
@idLeft_m@, @idRight_m@, and @assoc_m@, with @m = L a@.
\qed\end{proof}

\section{Verified Parallelization of Monoid Morphisms}\label{sec:parallelization}

A monoid morphism is a function between two monoids which
preserves the monoidal structure; \ie a function on the underlying
sets which preserves identity and associativity. We formally specify
this definition using a refinement type @Morphism@.
\begin{definition}[Monoid Morphism]\label{definition:morphism}
A function @f :: n -> m@ is a morphism
between the monoids
(@m@, @$\epsilon$@, @<>@)
and (@n@, @$\eta$@, @<+>@)
if @Morphism n m f@ has an inhabitant.
\begin{code}
type Morphism n m F =
  x:n -> y:n -> {F eta = epsilon && F (x <+> y) = F x <> F y}
\end{code}
\end{definition}

A monoid morphism can be parallelized when its domain can be cut into
chunks and put back together again, a property we refer to as
chunkable and expand upon in \S~\ref{subsec:chunkable}. A
chunkable monoid morphism is then parallelized by:
\begin{enumerate}
  \item chunking up the input,
  \item applying the morphism in parallel to all chunks, and
  \item recombining the chunks, also in parallel, back to a single value.
\end{enumerate}
In the rest of this section we implement and verify to be correct the above
transformation.

\subsection{Chunkable Monoids}\label{subsec:chunkable}
\begin{definition}[Chunkable Monoids]\label{definition:chunkable}
A monoid (@m@, @epsilon@, @<>@) is chunkable
if the following four functions are defined on @m@.
\begin{code}
length_m :: m -> Nat

drop_m :: i:Nat -> x:MGEq m i -> MEq m (length_m x-i)
take_m :: i:Nat -> x:MGEq m i -> MEq m i

takeDropProp_m :: i:Nat -> x:m ->
                  {x = take_m i x <> drop_m i x}
\end{code}

Where the type aliases @MLeq m I@ (and @MEq m I@)
constrain the monoid @m@ to have @length_m@
greater than (resp. equal) to @I@.
\begin{code}
type MGEq m I = {x:m | I <= length_m x }
type MEq  m I = {x:m | I = length_m x }
\end{code}
\end{definition}

Note that the ``important'' methods of chunkable monoids
are the @take@ and @drop@, while the @length@ method is required
to give pre- and post-condition on the other operations.
Finally, @takeDropProp@ provides a proof that
for each @i@ and monoid @x@, appending
@take i x@ to @drop i x@ will reconstruct @x@.

Using @take_m@ and @drop_m@ we define for each chunkable monoid
(@m@, @epsilon@, @<>@) a function @chunk_m i x@ that
splits @x@ in chunks of size @i@.
\begin{code}
chunk_m :: i:Pos -> x:m -> {v:L m | chunkRes_m i x v }
chunk_m i x
  | length_m x <= i = C x N
  | otherwise     = take_m i x `C` chunk_m i (drop_m i x)

chunkRes_m i x v
  | length_m x <= i = length_m v == 1
  | i == 1        = length_m v == length_m xs
  | otherwise     = length_m v < length_m xs
\end{code}

The function @chunk_m@ provably terminates as
@drop_m i x@
will return a monoid smaller than @x@,
by the Definition of @drop_m@.
The definitions of both @take_m@ and @drop_m@
are also used from Liquid Haskell to verify the
@length_m@ constraints in the result of @chunk_m@.

\ignore{
As a concrete example, to define list chunking, we first define the @take@ and @drop@
methods on the list monoid of section~\ref{sec:haskell-proofs}.
\begin{code}
take i N                    = N
take i (C x xs) | i == 0    = N
                | otherwise = C x (take (i-1) xs)

drop i N                    = N
drop i (C x xs) | i == 0    = C x xs
                | otherwise = drop (i-1) xs
\end{code}
We can prove that the above definitions
combined with the @length@ of section~\ref{sec:haskell-proofs}
satisfy the specifications
of the Chunkable Monoid Definition~\ref{definition:chunkable}.
Thus, we can prove that the aforementioned list data type,
extended with the appropriate implementation for @takeDropProp@
is a chunkable monoid.
}

\subsection{Parallel Map}
We define a parallelized map function @pmap@
using Haskell's library @parallel@.
Concretely, we use the function
@Control.Parallel.Strategies.withStrategy@
that computes its argument in parallel given a parallel strategy.
\begin{code}
pmap :: (a -> b) -> L a -> L b
pmap f xs = withStrategy parStrategy (map f xs)
\end{code}
The strategy @parStrategy@ does not affect verification.
In our codebase we choose the traversable strategy.
\begin{code}
parStrategy :: Strategy (L a)
parStrategy = parTraversable rseq
\end{code}

\paragraph{Parallelism in the Logic.}
The function @withStrategy@ is an imported Haskell library function,
whose implementation is not available during verification.
To use it in our verified code, we make the \textit{assumption}
that it always returns its second argument.
\begin{code}
assume withStrategy :: Strategy a -> x:a -> {v:a | v = x}
\end{code}
Moreover, we need to reflect the function @pmap@ and represent its
implementation in the logic.
Thus, we also need to represent the function @withStrategy@ in the logic.
LiquidHaskell represents @withStrategy@ in the logic as a logical
function that merely returns
its second argument, @withStrategy _ x = x@,
and does not reason about parallelism.

\subsection{Monoidal Concatenation}\label{subsec:mconcat}
The function @chunk_m@ lets us turn a monoidal value into several
pieces. In the other direction, for any monoid @m@, there is a
standard way of turning @L m@ back into a single @m@~\footnote{\texttt{mconcat} is usually defined as \texttt{foldr mappend mempty}}
\begin{code}
  mconcat :: L m -> m
  mconcat N        = mempty
  mconcat (C x xs) = x <> mconcat xs
\end{code}
For any chunkable monoid @n@,
monoid morphism @f :: n -> m@,
and natural number @i > 0@
we can write a chunked version of @f@ as
\begin{code}
  mconcat . pmap f . chunk_n i :: n -> m.
\end{code}
Before parallelizing @mconcat@, we will prove that the previous function is equivalent to @f@.

\begin{theorem}[Morphism Distribution]\label{theorem:monoid:distribution}
Let (@m@, @$\epsilon$@, @<>@) be a monoid
and (@n@, @$\eta$@, @<+>@) be a chunkable monoid.
Then, for every morphism @f :: n -> m@,
every positive number @i@ and input @x@,
@f x = mconcat (pmap f (chunk_n i x))@ holds.
\begin{code}
morphismDistribution
  :: f:(n -> m) -> Morphism n m f -> x:n -> i:Pos
  -> {f x = mconcat (pmap f (chunk_n i x))}
\end{code}
\end{theorem}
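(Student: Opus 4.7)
The plan is to prove the theorem by induction on the ``chunking structure'' of \texttt{x}, mirroring the recursion of \texttt{chunk\_n}. First, observe that because \texttt{withStrategy} is assumed in the logic to satisfy \texttt{withStrategy \_ x = x}, the function \texttt{pmap f xs} is logically equal to \texttt{map f xs}, so we can reason as if we were working with ordinary \texttt{map}. This lets us reduce the goal to showing \texttt{f x = mconcat (map f (chunk\_n i x))}.

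Next I would case split following the definition of \texttt{chunk\_n i x}. In the base case, \texttt{length\_n x <= i}, so \texttt{chunk\_n i x} unfolds to \texttt{C x N}. Then \texttt{map f (C x N) = C (f x) N} and \texttt{mconcat (C (f x) N) = f x <> mconcat N = f x <> epsilon}, which reduces to \texttt{f x} by the right-identity law \texttt{idRight\_m} of the target monoid. This is a straightforward equational chain built from \texttt{==.} steps, terminated by \texttt{*** QED} and justified by a single appeal to \texttt{idRight\_m}.

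In the inductive case, \texttt{length\_n x > i}, so \texttt{chunk\_n i x} unfolds to \texttt{C (take\_n i x) (chunk\_n i (drop\_n i x))}. Letting $x_1 = \texttt{take\_n i x}$ and $x_2 = \texttt{drop\_n i x}$, the proof proceeds by (i) applying \texttt{takeDropProp\_n i x} to rewrite \texttt{x} as \texttt{x\_1 <+> x\_2}; (ii) applying the morphism hypothesis \texttt{Morphism n m f} at \texttt{x\_1} and \texttt{x\_2} to rewrite \texttt{f (x\_1 <+> x\_2)} as \texttt{f x\_1 <> f x\_2}; (iii) unfolding \texttt{map} and \texttt{mconcat} on the cons, producing \texttt{f x\_1 <> mconcat (map f (chunk\_n i x\_2))}; and (iv) invoking the inductive hypothesis as a recursive call \texttt{morphismDistribution f mor x\_2 i} attached via \texttt{?} to justify \texttt{f x\_2 = mconcat (map f (chunk\_n i x\_2))}. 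Chaining these rewrites with \texttt{==.} closes the equation.

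The main obstacle is convincing Liquid Haskell that this induction is well-founded: the recursive call is on \texttt{drop\_n i x}, and termination requires that \texttt{length\_n (drop\_n i x) < length\_n x} in the inductive branch. This follows from the post-condition \texttt{drop\_m :: i:Nat -> x:MGEq m i -> MEq m (length\_m x - i)} together with \texttt{i > 0} and \texttt{length\_n x > i}, but the proof term has to be written so that this decrease is visible to the termination checker (likely by using \texttt{length\_n x} as an explicit termination metric on \texttt{morphismDistribution}). A secondary subtlety is that each step of the equational chain names the appropriate rewrite exactly once, since Liquid Haskell relies on SMT-level congruence rather than arbitrary unfolding; in particular, the uses of \texttt{idRight\_m} in the base case and of the \texttt{Morphism} hypothesis in the inductive case must be supplied explicitly via the \texttt{?} combinator.
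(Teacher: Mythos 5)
Your proposal matches the paper's proof essentially step for step: induction following the recursion of \texttt{chunk\_n} (measured by \texttt{length\_n x}), reduction of \texttt{pmap} to \texttt{map} via the \texttt{withStrategy} assumption, right identity \texttt{idRight\_m} in the base case, and in the inductive case the combination of \texttt{takeDropProp\_n}, the morphism hypothesis applied to \texttt{take\_n i x} and \texttt{drop\_n i x}, and the recursive call on \texttt{drop\_n i x} whose well-foundedness follows from the \texttt{drop\_m} postcondition. This is the same argument as in the paper, including the termination subtlety you flag.
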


\begin{proof}
We prove the theorem by providing an implementation of
@morphismDistribution@ that satisfies its type.
The proof proceeds by induction on the length of the input.
\begin{code}
morphismDistribution f thm x i
  | length_n x <= i
  =   mconcat (pmap f (chunk_n i x))
  ==. mconcat (map f (chunk_n i x))
  ==. mconcat (map f (C x N))
  ==. mconcat (f x `C` map f N)
  ==. f is <> mconcat N
  ==. f is <> epsilon
  ==. f is ? idRight_m (f is)
  *** QED
morphismDistribution f thm x i
  =   mconcat (pmap f (chunk_n i x))
  ==. mconcat (map f (chunk_n i x))
  ==. mconcat (map f (C takeX) (chunk_n i dropX)))
  ==. mconcat (f takeX `C` map f (chunk_n n dropX))
  ==. f takeX <> f dropX
      ? morphismDistribution f thm dropX i
  ==. f (takeX <+> dropX)
      ? thm takeX dropX
  ==. f x
      ? takeDropProp_n i x
  *** QED
  where
    dropX = drop_n i x
    takeX = take_n i x
\end{code}
In the base case we use rewriting and right identity on the monoid @f x@.
In the inductive case,
we use the inductive hypothesis on the input @dropX = drop_n i x@,
that is provably smaller than @x@ as @1 < i@.
Then, the fact that @f@ is a monoid morphism,
as encoded by our assumption argument @thm takeX dropX@
we get basic distribution of @f@, that is
@f takeX <> f dropX = f (takeX <+> dropX)@.
Finally, we merge @takeX <+> dropX@ to @x@
using the property @takeDropProp_n@ of the chunkable monoid @n@.
\qed\end{proof}

\subsection{Parallel Monoidal Concatenation}\label{subsec:pmconcat}
We now parallelize the monoid concatenation by defining a
@pmconat i x@ function that chunks the input list of monoids and concatenates each
chunk in parallel.

We use the @chunk@ function of \S~\ref{subsec:chunkable} instantiated to @L m@ to define a parallelized version of
monoid concatenation @pmconcat@.
\begin{code}
pmconcat :: Int -> L m -> m
pmconcat i x | i <= 1 || length x <= i
  = mconcat x
pmconcat i x
  = pmconcat i (pmap mconcat (chunk i x))
\end{code}
The function @pmconcat i x@ calls @mconcat x@ in the base case,
otherwise it
(1) chunks the list @x@ in lists of size @i@,
(2) runs in parallel @mconcat@ to each chunk,
(3) recursively runs itself with the resulting list.
Termination of @pmconcat@ holds, as the length of @chunk i x@
is smaller than the length of @x@, when @1 < i@.

Next, we prove equivalence of parallelized monoid concatenation.
\begin{theorem}[Correctness of Parallelization]\label{theorem:equivalence:concat}
Let (@m@, @$\epsilon$@, @<>@) be a monoid.
Then, the parallel and sequential concatenations are equivalent.
\begin{code}
pmconcatEquivalence
  :: i:Int -> x:L m -> { pmconcat i x = mconcat x }
\end{code}
\end{theorem}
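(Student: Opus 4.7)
The plan is to prove the theorem by strong induction on \texttt{length x}, mirroring the two branches of the definition of \texttt{pmconcat} and piggy-backing on the previously proved Morphism Distribution Theorem~\ref{theorem:monoid:distribution}. The base case, where \texttt{i <= 1} or \texttt{length x <= i}, is essentially free: \texttt{pmconcat i x} unfolds by definition to \texttt{mconcat x}, so the equality is closed by a single \texttt{==.} step followed by \texttt{*** QED}. All the work is concentrated in the inductive branch.

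In the inductive branch (\texttt{1 < i} and \texttt{i < length x}) I would first unfold \texttt{pmconcat i x} to \texttt{pmconcat i (pmap mconcat (chunk i x))}. Termination and the applicability of the inductive hypothesis rest on showing that \texttt{length (pmap mconcat (chunk i x)) < length x}; this follows from the \texttt{chunkRes} postcondition of \texttt{chunk} (which, for \texttt{1 < i < length x}, guarantees a strict decrease in length) together with the fact that \texttt{pmap} is extensionally \texttt{map} and therefore preserves length. With that in hand, the recursive call of \texttt{pmconcatEquivalence} rewrites \texttt{pmconcat i (pmap mconcat (chunk i x))} to \texttt{mconcat (pmap mconcat (chunk i x))}, and the theorem reduces to the single equation
\[
\texttt{mconcat (pmap mconcat (chunk i x))} \;=\; \texttt{mconcat x}.
\]

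I would then dispatch this equation by invoking Theorem~\ref{theorem:monoid:distribution} with the chunkable monoid \texttt{L m} in the role of \texttt{n}, the monoid \texttt{m} in the role of \texttt{m}, and \texttt{mconcat :: L m -> m} in the role of the morphism \texttt{f}. Using \texttt{mconcat} as the morphism requires feeding \texttt{morphismDistribution} with a proof term of type \texttt{Morphism (L m) m mconcat}, i.e.\ a lemma \texttt{mconcatMorphism} establishing \texttt{mconcat N = epsilon} (which is immediate from the definition of \texttt{mconcat}) and \texttt{mconcat (xs <> ys) = mconcat xs <> mconcat ys}. The latter is a short induction on \texttt{xs} that uses \texttt{idLeft\_m} in the base case and \texttt{assoc\_m} in the step, so it can be written in the same \texttt{==.}/\texttt{?} idiom used earlier.

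The main obstacle, as I see it, is not any one calculation but the plumbing: Liquid Haskell has to be convinced that (a) \texttt{L m} really is a chunkable monoid so that \texttt{chunk} in the statement of \texttt{morphismDistribution} and the \texttt{chunk} used inside \texttt{pmconcat} are definitionally the same instance, and (b) the length measure on \texttt{L m} decreases strictly on the recursive call, which in turn depends on the exact \texttt{chunkRes} disjunction (\texttt{length v == 1} vs.\ \texttt{length v == length xs} vs.\ \texttt{length v < length xs}). Once those two facts are exposed to the SMT solver, the proof script itself is a short chain of \texttt{==.} rewrites followed by a single \texttt{? morphismDistribution mconcat mconcatMorphism x i} hint in the inductive case.
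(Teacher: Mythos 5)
Your proposal is correct, and its top-level structure (definitional unfolding in the base case; recursing on the strictly shorter \texttt{pmap mconcat (chunk i x)} in the inductive case, justified by the \texttt{chunkRes} postcondition when \texttt{1 < i < length x}) matches the paper's. Where you genuinely diverge is in how you discharge the residual obligation \texttt{mconcat (pmap mconcat (chunk i x)) = mconcat x}. The paper proves this as a standalone lemma \texttt{mconcatChunk}, obtained by structural induction from a splitting lemma \texttt{mconcatSplit} stating \texttt{mconcat xs = mconcat (take i xs) <> mconcat (drop i xs)}, itself proved by unfolding \texttt{take} and \texttt{drop} and using associativity and left identity. You instead observe that this obligation is exactly Theorem~\ref{theorem:monoid:distribution} instantiated at \texttt{n = L m} and \texttt{f = mconcat}, which reduces the work to a morphism lemma \texttt{mconcat (xs ++ ys) = mconcat xs <> mconcat ys} (plus \texttt{mconcat N = epsilon}). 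The two routes carry essentially the same inductive content --- your morphism lemma is \texttt{mconcatSplit} in different clothing --- but yours is more modular and reuses an already-proved theorem, at the price of packaging \texttt{L m} as a full chunkable monoid: in particular you must supply \texttt{takeDropProp} for lists, i.e.\ \texttt{take i xs ++ drop i xs = xs}, since the proof of Theorem~\ref{theorem:monoid:distribution} consumes it, whereas the paper's direct route only ever touches the concrete list \texttt{take} and \texttt{drop} and never needs the packaged instance. Both are sound, and the two ``plumbing'' concerns you flag (that the \texttt{chunk} inside \texttt{pmconcat} is the same \texttt{L m} instance appearing in the instantiated theorem, and that the length measure strictly decreases only because the \texttt{i == 1} branch of \texttt{chunkRes} is excluded) are indeed the only places where real care is required.
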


\begin{proof}
We prove the theorem by providing a Haskell implementation of @pmconcatEquivalence@
that satisfies its type.
The details of the proof can be found in~\cite{implementation},
here we provide the sketch of the proof.

First, we prove that @mconcat@ distributes over list splitting
\begin{code}
mconcatSplit
  :: i:Nat -> xs:{L m | i <= length xs}
  -> { mconcat xs = mconcat (take i xs)
                 <> mconcat (drop i xs) }
\end{code}
The proofs proceeds by structural induction, using monoid left identity in the base case
and monoid associativity associavity and unfolding of @take@ and @drop@
methods in the inductive step.

We generalize the above lemma
to prove that @mconcat@ distributes over list chunking.
\begin{code}
mconcatChunk
  :: i:Pos -> xs:L m
  -> { mconcat xs = mconcat (map mconcat (chunk i xs)) }
\end{code}
The proofs proceeds by structural induction, using monoid left identity in the base case
and lemma @mconcatSplit@ in the inductive step.

Lemma @mconcatChunk@ is sufficient to prove @pmconcatEquivalence@ by structural induction,
using monoid left identity in the base case.
\qed\end{proof}

\subsection{Parallel Monoid Morphism}\label{subsec:both-levels}
We can now replace the @mconcat@ in our chunked monoid morphism in
\S~\ref{subsec:mconcat} with @pmconcat@ from
\S~\ref{subsec:pmconcat} to provide an implementation that uses
parallelism to both map the monoid morphism and concatenate the
results.

%
\begin{theorem}[Correctness of Parallelization]\label{theorem:two-level}
Let (@m@, @$\epsilon$@, @<>@) be a monoid
and (@n@, @$\eta$@, @<+>@) be a chunkable monoid.
Then, for every morphism @f :: n -> m@,
every positive numbers @i@ and @j@, and input @x@,
@f x = pmconcat i (pmap f (chunk_n j x))@ holds.
\begin{code}
parallelismEquivalence
  :: f:(n -> m) -> Morphism n m f -> x:n -> i:Pos -> j:Pos
  -> {f x = pmconcat i (pmap f (chunk_n j x))}
\end{code}
\end{theorem}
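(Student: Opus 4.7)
The plan is to chain together the two major results already established: Theorem~\ref{theorem:monoid:distribution} (Morphism Distribution), which says @f x = mconcat (pmap f (chunk_n i x))@, and Theorem~\ref{theorem:equivalence:concat} (Correctness of Parallelization), which says @pmconcat i x = mconcat x@. Composing these two equalities directly yields the desired identity. No fresh induction is needed at this step, because all of the inductive reasoning has already been encapsulated inside the two subordinate lemmas; this final theorem is a purely equational consequence of them.

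Concretely, I would implement @parallelismEquivalence f thm x i j@ as a short calculational proof term of the kind used throughout \S\ref{sec:haskell-proofs} and \S\ref{sec:parallelization}. Starting from the right-hand side @pmconcat i (pmap f (chunk_n j x))@, I rewrite to @mconcat (pmap f (chunk_n j x))@ by invoking @pmconcatEquivalence i (pmap f (chunk_n j x))@ as a hint via the @(?)@ combinator, and then rewrite to @f x@ by invoking @morphismDistribution f thm x j@. The chain of @(==.)@ steps is terminated with @*** QED@, producing a one- or two-line proof term whose validity Liquid Haskell checks by congruence closure over the inserted rewrites.

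The main care point, rather than a genuine obstacle, is lining up the two chunking parameters correctly: @pmconcatEquivalence@ has to be instantiated with the outer parallel-concatenation parameter @i@ applied to the list @pmap f (chunk_n j x)@, whereas @morphismDistribution@ has to be instantiated with the input-chunking parameter @j@ applied to the original @x@. Once these instantiations are written down explicitly, the rewrite chain closes. Note also that, because @withStrategy@ is axiomatised in the logic to return its second argument, @pmap f (chunk_n j x)@ and @map f (chunk_n j x)@ are definitionally equal, so the statement of @morphismDistribution@ (written in terms of @pmap@) matches what we need here with no auxiliary unfolding lemma.
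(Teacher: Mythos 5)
Your proposal is correct and matches the paper's proof essentially verbatim: the paper's implementation of \texttt{parallelismEquivalence} is exactly the two-step calculational chain you describe, rewriting \texttt{pmconcat i (pmap f (chunk\_n j x))} to \texttt{mconcat (pmap f (chunk\_n j x))} via \texttt{pmconcatEquivalence i (pmap f (chunk\_n j x))} and then to \texttt{f x} via \texttt{morphismDistribution f thm x j}. Your remark about instantiating the two parameters \texttt{i} and \texttt{j} correctly is precisely what the paper's proof term does.
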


\begin{proof}
We prove the theorem by providing an implementation of
@parallelismEquivalence@ that satisfies its type.
\begin{code}
parallelismEquivalence f thm x i j
  =   pmconcat i (pmap f (chunk_n j x))
  ==. mconcat (pmap f (chunk_n j x))
      ? pmconcatEquivalence i (pmap f (chunk_n j x))
  ==. f x
      ? morphismDistribution f thm x j
  *** QED
\end{code}
The proof follows merely by application of the 
two previous Theorems~\ref{theorem:monoid:distribution} and~\ref{theorem:equivalence:concat}.
\qed\end{proof}

\ignore{
\paragraph{A Basic Time Complexity} analysis of the algorithm
reveals that parallelization of morphism  leads to runtime speedups
on monads with fast (constant time) appending operator.

We want to compare the complexities of the sequential @f i@
and the two-level parallel @pmconcat i (pmap f (chunk_n j x))@.
Let $n$ be the size on the input @x@.
Then, the sequential version runs in time
$T_f(n) = O(n)$, that is equal to the time complexity of the morphism @f@ on input @i@.

The parallel version runs @f@ on inputs of size $n' = \frac{n}{j}$.
Assuming the complexity of @x <> y@ to be $T_\mappend(\text{max}(|\tx|, |\ty|))$,
complexity of @mconcat xs@ is $O((\texttt{length \txs}-1) T_\mappend(\text{max}_{\tx_i \in \txs}(|\tx_i|)))$.
Now, parallel concatenation, @pmconcat i xs@ at each iteration runs @mappend@
on a list of size @i@. Moreover,
at each iteration, divides the input list in chunks of size @i@, leading to
$\frac{\log|xs|}{\log i}$ iterations, and time complexity
$(i-1)(\frac{\log|xs|}{\log i})(T_\mappend(m))$
for some $m$ that bounds the size of the monoids.

The time complexity of parallel algorithm consists on the base cost on running @f@
at each chunk and then parallel concatenating the $\frac{n}{j}$ chunks.
\begin{equation}
O((i-1)(\frac{\log n - \log j}{\log i})T_\mappend(m) + T_f(\frac{n}{j})) \label{eq:complexity}
\end{equation}
Since time complexity depends on the time complexity of @<>@
for the parallel algorithm to be efficient time complexity of @<>@ should be constant.
Otherwise, if it depends on the size of the input, the size of monoids can grow at each iteration of @mconcat@.

Moreover, from the complexity analysis we observe that time grows on bigger @i@ and smaller @j@.
Thus, chunking the input in small chunks while splitting the monoid list in half leads
to more parallelism, and thus (assuming infinite processors and no caching) greatest speedup.
}

\section{Case Study: Correctness of Parallel String Matching}\label{sec:stringmatching}

\S~\ref{sec:parallelization} showed that any monoid morphism
whose domain is chunkable can be parallelized. We now make use of that
result to parallelize string matching. We start by observing that
strings are a chunkable monoid. 
We then turn string matching for a
given target into a monoid morphism from a string to a suitable
monoid, @SM target@, defined in
\S~\ref{subsec:stringmatcher}. 
Finally, in \S~\ref{subsec:parallel-string-matching}, we parallelize string matching
by a simple use of the
parallel morphism function of \S~\ref{subsec:both-levels}. 

\ignore{
  In this section we apply the Correctness of Parallelization Theorem~\ref{theorem:two-level}
to a string matching function @toSM@
that is a morphism between strings and the indices where a target string appears,
to get correctness of parallelization of @toSM@.

We define @toSM :: RString -> SM target@
from a Refined String data type @RString@
to a dependently typed string maching data type @SM target@
where @target@ represents the substring to be matched.
To apply Theorem~\ref{theorem:two-level} on @toSM@ we need to discharge three proof obligations.
\begin{itemize}
\item @RString@ is a chunkable monoid (\S~\ref{subsec:refinedstrings}),
\item @SM target@ is a monoid (\S~\ref{subsec:stringmatcher}), and
\item @toSM@ is a morphism between @RString@ and @SM target@ (\S~\ref{subsec:smmorphism}).
\end{itemize}
With these proof obligations discharged we conclude (\S~\ref{subsec:parallel-string-matching})
correctness of parallel string matching.
}

\subsection{Refined Strings are Chunkable Monoids}\label{subsec:refinedstrings}
We define a new type @RString@, which is a chunkable monoid, to be the
domain of our string matching function. Our type simply wraps
Haskell's existing @ByteString@.
\begin{code}
data RString = RS BS.ByteString
\end{code}
Similarly, we wrap the existing @ByteString@ functions we will need to
show @RString@ is a chunkable monoid.
\begin{code}
stringMempty = RS (BS.empty)
(RS x) stringMappend (RS y)= S (x `BS.append` y)

lenStr    (RS x) = BS.length x
takeStr i (RS x) = RS (BS.take i x)
dropStr i (RS x) = RS (BS.take i x)
\end{code}
Although it is possible to explicitly prove that @ByteString@
implements a chunkable monoid~\cite{realworldliquid14}, it is time
consuming and orthogonal to our purpose. Instead, we
just \textit{assume} the chunkable monoid properties of @RString@--
thus demonstrating that refinement reflection is capable of doing
gradual verification.

\ignore{
We follow the easy route, defining the @RString@ data type to be a wrapper of the
optimized @ByteString@ and

This allows our implementation to \textit{use the optimized library
functions} of @ByteString@; additionally, it shows that refinement
reflection can be used for \textit{gradual verification} where
verified code uses untrusted components that are explicitely assumed
to satisfy required properties.
Proving that @ByteString@ implements a chunkable monoid is feasible,
as evidenced by the current verification of Bytestring functions~\cite{realworldliquid14},
but it is time consuming and orthogonal to the string matching proof.
%

We need to use the above operators to specify the chunkable monoid
laws, but we cannot reflect the operators in the logic, as the
ByteString functions do not exist in the logic.  We ``manually
reflect'' each of the above functions in the logic, by 1. defining a
logical uninterpreted function for each of them that assumed to be
equal to the Haskell functions and 2. use the logical functions to
specify the chunkable monoid laws.
}

For instance, we define a logical uninterpreted function
@stringMappend@ and relate it to the Haskell @stringMappend@ function
via an assumed (unchecked) type.
\begin{code}
assume (stringMappend)
  :: x:RString -> y:RString -> {v:RString | v = x stringMappend y}
\end{code}
Then, we use the uninterpreted function @stringMappend@ in the logic
to assume monoid laws, like associativity.
\begin{code}
assume assocStr :: x:RString -> y:RString -> z:RString
                 -> { x <+> (y <+> z) = (x <+> y) <+> z }
assocStr _ _     = trivial
\end{code}
Haskell applications of @stringMappend@ are interpreted in the logic
via the logical @stringMappend@ that satisfies associativity via theorem @assocStr@.

Similarly for the chunkable methods, we define the uninterpreted functions
@takeStr@, @dropStr@ and @lenStr@ in the logic,
and use them to strengthen the result types of the respective functions.
\ignore{
For example the type of @takeStr@ includes both the length specifications
from chunkable monoid and the uninterpreted function equality @v = takStr i x@.
\begin{code}
assume takeStr
  :: i:Nat -> x:{RString | i <= lenStr x}
  ->  {v:RString | lenStr v = i && v = takeStr i x }
\end{code}
We use the uninterpreted function @takeStr@ and @dropStr@ to
specify and \textit{assume} the @take@-@drop@ property of chunkable monoids.
\begin{code}
assume takeDropPropStr
 :: i:Nat -> x:RString -> {x = takeStr i x <+> dropStr i x}
takeDropPropStr _ _ = trivial
\end{code}
} With the above function definitions (in both Haskell and logic) and
assumed type specifications, Liquid Haskell will check (or rather
assume) that the specifications of chunkable monoid, as defined in the
Definitions~\ref{definition:monoid} and~\ref{definition:chunkable},
are satisfied.
We conclude with the assumption (rather that theorem)
that @RString@ is a chunkable monoid.
\begin{assumption}[RString is a Chunkable Monoid]\label{assumption:rstring}
(@RString@, @stringMempty@, @stringMappend@)
combined with the methods
@lenStr@, @takeStr@, @dropStr@ and @takeDropPropStr@
is a chunkable monoid.
\end{assumption}

\subsection{String Matching Monoid}\label{subsec:stringmatcher}
String matching is determining all the indices in a source string
where a given target string begins; for example, for source string
\texttt{ababab} and target \texttt{aba} the results of string
matching would be \texttt{[0, 2]}. 
\ignore{
We will use the following code as a specification of string matching:
\begin{code}
stringMatcher inp trg = go 0 where
  trgLen = length trg
  go i | i >= length inp = []
       | take trgLen (drop i inp) == trg = i : rest
       | otherwise = rest
  where rest = go (i + 1)
\end{code}
Though this implementation is not efficient, it is simple and provides
a good starting point from which to develop the machinery in this section.
}

We now define a suitable monoid, @SM target@, for the codomain of
a string matching function, where @target@ is the string being looked
for.
Additionally, we will define a function @toSM :: RString -> SM target@
which does the string matching and is indeed a monoid morphism from
@RString@ to @SM target@ for a given @target@.

\ignore{
Next, we use @RString@ to define the string matching data type
@SM target@ that stores an input refined string and all the
indices where the string type literal @target@ appears in the input.
We defined the monoid methods @mempty@ and @mappend@ of the string matcher
and prove that the structure (@SM target@, @mempty@, @mappend@) is a monoid.
}
\subsubsection{String Matching Monoid}

We define the data type
@SM target@ to contain a refined string field @input@ and
a list of all the indices in @input@ where the
@target@ appears.
\begin{code}
data SM (target :: Symbol) where
  SM :: input:RString
     -> indices:[GoodIndex input target]
     -> SM target
\end{code}
We use the string type literal~\footnote{\texttt{Symbol} is a kind and
target is effectively a singleton type.} to parameterize the monoid
over the target being matched. This encoding allows the type checker
to statically ensure that only searches for the same target can be
merged together.  The input field is a refined string, and the indices
field is a list of good indices.  For simplicity we present lists as
Haskell's built-in lists, but our implementation uses the reflected
list type, @L@, defined in \S~\ref{sec:haskell-proofs}.

A @GoodIndex input target@ is a refined type alias for a natural
number @i@ for which @target@ appears at position @i@ of @input@.  As
an example, the good indices of @"abcab"@ on @"ababcabcab"@ are
@[2,5]@.
\begin{code}
type GoodIndex Input Target
  = {i:Nat | isGoodIndex Input (fromString Target) i }

isGoodIndex :: RString -> RString -> Int -> Bool
isGoodIndex input target i
  = (subString i (lenStr target) input  == target)
  && (i + lenStr target <= lenStr input)

subString :: Int -> Int -> RString -> RString
subString o l = takeStr l . dropStr o
\end{code}
\ignore{
\begin{code}
goodSM :: SM "abcab"
goodSM = SM "ababcabcab" [2, 5]

badSM  :: SM "abcab"
badSM  = SM "ababcabcab" [0, 7]
\end{code}
\ignore{
\NV{Liquid Haskell actually will reject both the above, as the lenStr and subString functions are uninterpreted}
}
}

\subsubsection{Monoid Methods for String Matching}~\label{subsec:monoid:methods}
Next, we define the mappend and identity elements for string matching.

The \textit{identity element} @mempty@ of @SM t@, for each target @t@, is
defined to contain the identity @RString@ (@stringMempty@) and the
identity @List@ (@listMempty@).
\begin{code}
mempty:: forall (t :: Symbol). SM t
mempty = SM stringMempty listMempty
\end{code}

\ignore{
The associative operator, @(mappend)@,
appends the two input strings.
The appended indices, as depicted in Figure~\ref{fig:mappend:indices},
are the concatenations of three list indices:
\begin{enumerate}
\item The indices @xis@ of the first input, casted to good indices in the new structure,
\item the new indices @xyis@ created when concatenating the two strings, and
\item the indices @yis@ of the second input, shifted right @lenStr y@ units.
\end{enumerate}
}
\begin{figure}[t]
\includegraphics[scale=0.5]{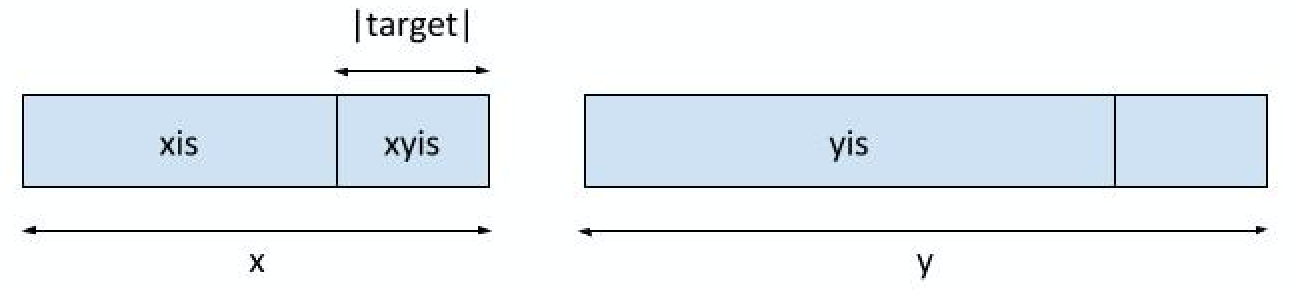}
\caption{Mappend indices of String Matcher}
\label{fig:mappend:indices}
\end{figure}
The Haskell definition of @<>@, the monoid operation for @SM t@, is as follows.
\begin{code}
(mappend)::forall (t::Symbol). KnownSymbol t => SM t -> SM t -> SM t
(SM x xis) mappend (SM y yis)
  = SM (x stringMappend y) (xis' listMappend xyis listMappend yis')
  where
    tg   = fromString (symbolVal (Proxy :: Proxy t))
    xis' = map (castGoodIndexLeft tg x y) xis
    xyis = makeNewIndices x y tg
    yis' = map (shiftStringRight tg x y) yis
\end{code}
Note again that capturing target as a type parameter is critical,
otherwise there is no way for the Haskell's type system to specify
that both arguments of @(mappend)@ are string matchers on the same target.

The action of @(<>)@ on the two @input@ fields is straightforward;
however, the action on the two @indices@ is complicated by the need to
shift indices and the possibility of new matches arising from the
concatenation of the two @input@
fields. Figure~\ref{fig:mappend:indices} illustrates the three pieces
of the new @indices@ field which we now explain in more detail.

\paragraph{1. Casting Good Indices}
If @xis@ is a list of good indices for the string @x@ and the target
@tg@, then @xis@ is also a list of good indices for the string
@x stringMappend y@ and the target @tg@, for each @y@.
To prove this property we need to invoke the property
@subStrAppendRight@ on Refined Strings that establishes
substring preservation on string right appending.
\begin{code}
assume subStrAppendRight
    :: sl:RString -> sr:RString -> j:Int
    ->  i:{Int | i + j <= lenStr sl }
    ->  { subString sl i j = subString (sl stringMappend sr) i j }
\end{code}
The specification of @subStrAppendRight@ ensures that for each
string @sl@ and @sr@ and each integer @i@ and @j@ whose sum is within @sl@,
the substring from @i@ with length @j@ is identical in @sl@ and in @(sl stringMappend sr)@.
The function @castGoodIndexLeft@ applies the above property to an index @i@
to cast it from a good index on @sl@ to a good index on @(sl stringMappend sr)@
\begin{code}
castGoodIndexLeft
  :: tg:RString -> sl:RString -> sr:RString
  -> i:GoodIndex sl tg
  -> {v:GoodIndex (sl stringMappend sr) target | v = i}

castGoodIndexLeft tg sl sr i
  = cast (subStrAppendRight sl sr (lenStr tg) i) i
\end{code}
Where @cast p x@ returns @x@, after enforcing the properties of @p@ in the logic
\begin{code}
cast :: b -> x:a -> {v:a | v = x }
cast _ x = x
\end{code}
Moreover, in the logic, each expression @cast p x@
is reflected as @x@,
thus allowing random (\ie non-reflected) Haskell expressions to appear in @p@.

\paragraph{2. Creation of new indices}
The concatenation of two input strings @sl@ and @sr@
may create new good indices.
For instance, concatenation of
@"ababcab"@ with @"cab"@
leads to a new occurence of @"abcab"@ at index @5@ which
does not occur in either of the two input strings.
These new good indices can appear only at the last @lenStr tg@ positions
of the left input @sl@.
@makeNewIndices sl sr tg@ detects all such good new indices.
\begin{code}
makeNewIndices
  :: sl:RString -> sr:RString -> tg:RString
  -> [GoodIndex {sl stringMappend sr} tg]
makeNewIndices sl sr tg
  | lenStr tg < 2 = []
  | otherwise     = makeIndices (sl stringMappend sr) tg lo hi
  where
    lo = maxInt (lenStr sl - (lenStr tg - 1)) 0
    hi = lenStr sl - 1
\end{code}
If the length of the @tg@ is less than 2, then no new good indices are created.
Otherwise,
the call on @makeIndices@ returns all the good indices of the input
@sl stringMappend sr@ for target @tg@
in the range from @maxInt (lenStr sl-(lenStr tg-1)) 0@ to  @lenStr sl-1@.

Generally, @makeIndices s tg lo hi@ returns the good indices
of the input string @s@ for target @tg@ in the range from @lo@ to @hi@.
\begin{code}
makeIndices
  :: s:RString -> tg:RString -> lo:Nat
  -> hi:Int -> [GoodIndex s tg]
makeIndices s tg lo hi
  | hi < lo             = []
  | isGoodIndex s tg lo = lo:rest
  | otherwise           = rest
  where
    rest = makeIndices s tg (lo + 1) hi
\end{code}

It is important to note that
@makeNewIndices@ does not scan all the input,
instead only searching at most @lenStr tg@ positions for new good indices.
Thus, the time complexity to create the new indices is linear
on the size of the target but independent of the size of the input.

\paragraph{3. Shift Good Indices}
If @yis@ is a list of good indices on the string @y@ with target @tg@,
then we need to shift each element of @yis@ right @lenStr x@ units to
get a list of good indices for the string @x stringMappend y@.

To prove this property we need to invoke the property
@subStrAppendLeft@ on Refined Strings that establishes
substring shifting on string left appending.
\begin{code}
assume subStrAppendLeft
  :: sl:RString -> sr:RString
  -> j:Int -> i:Int
  -> {subStr sr i j = subStr (sl stringMappend sr) (lenStr sl+i) j}
\end{code}
The specification of @subStrAppendLeft@ ensures that for each
string @sl@ and @sr@ and each integers @i@ and @j@,
the substring from @i@ with length @j@ on @sr@
is equal to the substring from @lenStr sl + i@
with length @j@ on @(sl stringMappend sr)@.
The function @shiftStringRight@ both shifts the input index @i@ by @lenStr sl@
and applies the @subStrAppendLeft@ property to it,
casting @i@ from a good index on @sr@ to a good index on @(sl stringMappend sr)@

Thus, @shiftStringRight@ both appropriately shifts the index
and casts the shifted index using the above theorem:
\begin{code}
shiftStringRight
  :: tg:RString -> sl:RString -> sr:RString
  -> i:GoodIndex sr tg
  -> {v:(GoodIndex (sl stringMappend sr) tg) | v = i + lenStr sl}
shiftStringRight tg sl sr i
  = subStrAppendLeft sl sr (lenStr tg) i
    `cast` i + lenStr sl
\end{code}

\subsubsection{String Matching is a Monoid}
Next we prove that the monoid methods @mempty@ and @(mappend)@ satisfy
the monoid laws.
\begin{theorem}[SM is a Monoid]\label{theorem:stringmatchers}
(@SM t@, @mempty@, @mappend@)
is a monoid.
\end{theorem}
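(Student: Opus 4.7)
The plan is to prove the three monoid laws \texttt{idLeft}, \texttt{idRight}, \texttt{assoc} for \texttt{SM t} by supplying Haskell terms whose refinement types witness them, following exactly the template used for lists in \S~\ref{sec:haskell-proofs}. The three proofs split naturally along the two components of the \texttt{SM} record: the \texttt{input} field is handled by the already-assumed monoid laws on \texttt{RString} (Assumption~\ref{assumption:rstring}), so the real work is reasoning about the \texttt{indices} field, which is a concatenation of three lists \texttt{xis'}, \texttt{xyis}, \texttt{yis'} built by \texttt{castGoodIndexLeft}, \texttt{makeNewIndices}, and \texttt{shiftStringRight}.

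For left identity, I would unfold \texttt{mempty <> (SM y yis)} into \texttt{SM (stringMempty <+> y) (xis' ++ xyis ++ yis')} with \texttt{xis = []}. Then \texttt{xis' = map (castGoodIndexLeft tg stringMempty y) [] = []}, and \texttt{xyis = makeNewIndices stringMempty y tg} is empty because its bounds satisfy \texttt{hi = lenStr stringMempty - 1 = -1 < 0 = lo}, so \texttt{makeIndices} returns \texttt{[]} in its base case. What remains is to show \texttt{map (shiftStringRight tg stringMempty y) yis = yis}; each shift adds \texttt{lenStr stringMempty = 0}, which is an auxiliary lemma proved by structural induction on \texttt{yis} (using that the \texttt{cast} combinator is semantically the identity). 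Finally, \texttt{stringMempty <+> y = y} comes from Assumption~\ref{assumption:rstring}. Right identity is symmetric: \texttt{yis = []}, \texttt{makeNewIndices x stringMempty tg} is empty by the same bound analysis applied on the right, and \texttt{map (castGoodIndexLeft tg x stringMempty) xis = xis} reduces to the fact that \texttt{cast} preserves values (again by induction on \texttt{xis}).

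For associativity, the \texttt{input} side is discharged immediately by \texttt{assocStr} on \texttt{RString}. For the \texttt{indices} side, I would expand both \texttt{(x <> y) <> z} and \texttt{x <> (y <> z)} two levels deep and show the resulting triple-concatenations of index lists are equal. The key helper lemma will be a distribution/re-association property for \texttt{makeNewIndices}: the new matches introduced when concatenating \texttt{x <+> y} with \texttt{z} together with the new matches from concatenating \texttt{x} with \texttt{y}, once appropriately shifted by \texttt{lenStr x} or cast by appending \texttt{z} on the right, coincide with the union of new matches produced in the other association order. This is where I expect to need a technical induction on the range \texttt{[lo..hi]} inside \texttt{makeIndices}, combined with repeated applications of \texttt{subStrAppendRight} and \texttt{subStrAppendLeft} to push substring queries across \texttt{<+>}. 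I would also need that \texttt{castGoodIndexLeft} and \texttt{shiftStringRight} compose coherently, i.e., casting after shifting equals shifting after casting, which is essentially by definition since both are value-preserving up to the arithmetic \texttt{i + lenStr sl}.

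The main obstacle will be precisely this associativity proof on the indices, specifically the \texttt{makeNewIndices} part. The three contributions to the merged index list overlap in subtle ways across the two re-associations: matches that straddle the \texttt{x}/\texttt{y} boundary may also be detected in the \texttt{(x<+>y)}/\texttt{z} split, and the \texttt{lo = max 0 (lenStr sl - (lenStr tg - 1))} cutoff has to be tracked carefully so that no index is double-counted or missed. I would therefore factor the proof through two lemmas — one stating that \texttt{makeIndices} over an interval splits as the concatenation over a partition of that interval, and one stating that \texttt{makeIndices} on \texttt{sl <+> sr} restricted to indices within \texttt{sl} agrees with \texttt{makeIndices sl} cast to the concatenation — and only then assemble the final equality between the two triple concatenations. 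With those lemmas in hand the top-level proof is, as in Theorem~\ref{theorem:monoid:list}, a chain of \texttt{==.} steps justified by rewriting, concluded with \texttt{*** QED}.
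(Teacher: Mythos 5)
Your overall plan matches the paper's proof: both identities and associativity are proved componentwise, the \texttt{input} field is discharged by the assumed \texttt{RString} laws, and the \texttt{indices} field is handled by the same three auxiliary facts (shifting by the empty string is the identity, casting is the identity, and \texttt{makeNewIndices} with an empty argument produces no new indices), with associativity reduced to regrouping the concatenated index lists and reconciling the two representations of each group.

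There is, however, one concrete step that fails as you state it. For right identity you claim \texttt{makeNewIndices x stringMempty tg} is empty ``by the same bound analysis applied on the right,'' i.e.\ by the \texttt{hi < lo} base case of \texttt{makeIndices}. But here \texttt{lo = max (lenStr x - (lenStr tg - 1)) 0} and \texttt{hi = lenStr x - 1}, so whenever \texttt{lenStr tg >= 2} and \texttt{x} is nonempty the range is \emph{not} empty. The list is \texttt{[]} for a different reason: every candidate \texttt{i} in that range has \texttt{i + lenStr tg > lenStr x = lenStr (x <+> stringMempty)} and therefore fails \texttt{isGoodIndex}. Accordingly the paper's lemma for this direction proceeds by case analysis on the relative lengths of the string and the target, with an induction over the range showing each candidate index is out of bounds --- it is only the \emph{left}-identity direction, \texttt{makeNewIndices stringMempty y tg}, that falls out of the empty-range base case as you describe. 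A related remark applies to associativity: your two proposed lemmas (interval splitting of \texttt{makeIndices}, and agreement of \texttt{makeIndices} on a prefix under right-appending) are the right kind of machinery, but the paper isolates the crux as a single lemma \texttt{assocNewIndices} that treats the three ``middle'' index groups together and whose proof is a case analysis on the relative lengths of the target and the \emph{middle} string \texttt{yi}; the case where \texttt{yi} is shorter than the target, so that a single match can straddle all three strings, is the genuinely domain-specific case, and you should check that your interval bookkeeping covers it rather than assuming the two association orders partition the candidate indices in the same way.
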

\begin{proof}
According to the Monoid Definition~\ref{definition:monoid},
we prove that string matching is a monoid,
by providing safe implementations for the monoid law functions.
First, we prove \textit{left identity}.
\begin{code}
idLeft :: x:SM t -> {mempty mappend x = xs }
idLeft (SM i is)
  =   (mempty :: SM t) mappend (SM i is)
  ==. (SM stringMempty listMempty) mappend (SM i is)
  ==. SM (stringMempty <+> i) (is1 ++ isNew ++ is2)
       ? idLeftStr i
  ==. SM i ([] ++ [] ++ is)
       ? (mapShiftZero tg i is && newIsNullRight i tg)
  ==. SM i is
       ? idLeftList is
  *** QED
  where
    tg    = fromString (symbolVal (Proxy :: Proxy t))
    is1   = map (castGoodIndexRight tg i stringMempty) []
    isNew = makeNewIndices stringMempty i tg
    is2   = (map (shiftStringRight tg stringMempty i) is)
\end{code}

The proof proceeds by rewriting, using left identity of the monoid strings and lists,
and two more lemmata.
\begin{itemize}
\item Identity of shifting by an empty string.
\begin{code}
mapShiftZero :: tg:RString -> i:RString
  -> is:[GoodIndex i target]
  -> {map (shiftStringRight tg stringMempty i) is = is }
\end{code}
The lemma is proven by induction on @is@ and
the assumption that empty strings have length 0.
\item No new indices are created.
\begin{code}
newIsNullLeft :: s:RString -> t:RString
  -> {makeNewIndices stringMempty s t = [] }
\end{code}
The proof relies on the fact that @makeIndices@
is called on the empty range from @0@ to @-1@
and returns @[]@.
\end{itemize}

Next, we prove \textit{right identity}.
\begin{code}
idRight :: x:SM t -> {x mappend mempty = x }
idRight (SM i is)
  =   (SM i is) mappend (mempty :: SM t)
  ==. (SM i is) mappend (SM stringMempty listMempty)
  ==. SM (i stringMappend stringMempty) (is1 listMappend isNew listMappend is2)
      ? idRightStr i
  ==. SM i (is listMappend N listMappend N)
      ? (mapCastId tg i stringMempty is && newIsNullLeft i tg)
  ==. SM i is
      ? idRightList is
  ***  QED
  where
    tg    = fromString (symbolVal (Proxy :: Proxy t))
    is1   = map (castGoodIndexRight tg i stringMempty) is
    isNew = makeNewIndices i stringEmp tg
    is2   = map (shiftStringRight tg i stringMempty) []
\end{code}
The proof proceeds by rewriting,
using right identity on strings and lists and two more lemmata.
\begin{itemize}
\item Identity of casting is proven
\begin{code}
mapCastId :: tg:RString -> x:RString -> y:RString
  -> is:[GoodIndex x tg] ->
  -> {map (castGoodIndexRight tg x y) is = is}
\end{code}
We prove identity of casts by induction on @is@ and
identity of casting on a single index.
\item No new indices are created.
\begin{code}
newIsNullLeft :: s:RString -> t:RString
  -> {makeNewIndices s stringMempty t = listMempty }
\end{code}
The proof proceeds by case splitting
on the relative length of @s@ and @t@.
At each case we prove by induction that all
the potential new indices would be out of bounds and thus
no new good indices would be created.
\end{itemize}

- Finally we prove \textit{associativity}.
For space, we only provide a proof sketch.
The whole proof is available online~\cite{implementation}.
Our goal is to show equality of the left and right associative string matchers.
\begin{code}
assoc :: x:SM t -> y:SM t -> z:SM t
       -> { x mappend (y mappend z) = (x mappend y) mappend z}
\end{code}
To prove equality of the two string matchers we show
that the input and indices fields are respectively equal.
Equality of the input fields follows by associativity of RStrings.
Equality of the index list proceeds in three steps.
\begin{enumerate}
\item Using list associativity and distribution of index shifting,
we group the indices in the five lists shown in
Figure~\ref{fig:mappend:assoc}: the indices of the input @x@, the new
indices from mappending @x@ to @y@, the indices of the input @y@, the
new indices from mappending @x@ to @y@, and the indices of the input
@z@.
\item The representation of each group depends on the order of appending.
For example, if @zis1@ (resp. @zis2@) is the group @zis@ when
right (resp. left) mappend happened first, then we have
\begin{code}
zis1 = map (shiftStringRight tg xi (yi stringMappend zi))
           (map (shiftStringRight tg yi zi) zis)

zis2 = map (shiftStringRight tg (xi stringMappend yi) zi) zis
\end{code}
That is, in right first, the indices of @z@ are first shifted
by the length of @yi@ and then by the length of @xi@,
while in the left first case, the indices of @z@ are shifted by the
length of @xi stringMappend yi@.
In this second step of the proof we prove, using lemmata,
the equivalence of the different group representations.
The most interesting lemma we use is called @assocNewIndices@ and proves
equivalence of all the three middle groups together
by case analysis on the relative lengths of the target @tg@ and the middle string @yi@.
\item After proving equivalence of representations,
we again use list associativity and distribution of casts to wrap the
index groups back in string matchers.
\end{enumerate}
\begin{figure}
\includegraphics[scale=0.5]{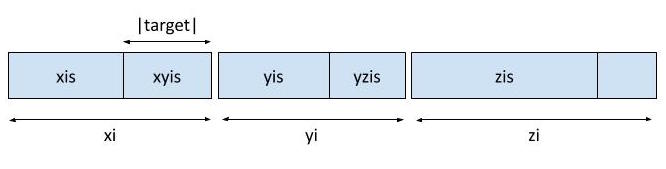}
\caption{Associativity of String Matching}
\label{fig:mappend:assoc}
\end{figure}
We now sketch the three proof steps, while the whole proof
is available online~\cite{implementation}.
\begin{code}
assoc x@(SM xi xis) y@(SM yi yis) z@(SM zi zis)
  -- Step 1: unwrapping the indices
  =   x <> (y <> z)
  ==. (SM xi xis) <> ((SM yi yis) <> (SM zi zis))
                         ...
  -- via list associativity and distribution of shifts
  ==. SM i (xis1 ++ ((xyis1 ++ yis1 ++ yzis1) ++ zis1))
  -- Step 2: Equivalence of representations
  ==. SM i (xis2 ++ ((xyis1 ++ yis1 ++ yzis1) ++ zis1))
      ? castConcat tg xi yi zi xis
  ==. SM i (xis2 ++ ((xyis1 ++ yis1 ++ yzis1) ++ zis2))
      ? mapLenFusion tg xi yi zi zis
  ==. SM i (xis2 ++ ((xyis2 ++ yis2 ++ yzis2) ++ zis2))
      ? assocNewIndices y tg xi yi zi yis
  -- Step 3: Wrapping the indices
                         ...
  -- via list associativity and distribution of casts
  ==. (SM xi xis <> SM yi yis) <> SM zi zis
  =   (x <> y) <> z
  *** QED
  where
    i     = xi stringMappend (yi stringMappend zi)

    yzis1 = map (shiftStringRight tg xi (yi <+> zi)) yzis
    yzis2 = makeNewIndices (xi <+> yi) zi tg
    yzis  = makeNewIndices yi zi tg
    ...
\end{code}
\qed\end{proof}

\subsection{String Matching Monoid Morphism}\label{subsec:smmorphism}
Next, we define the function @toSM :: RString -> SM target@ which does
the actual string matching computation for a set
target~\footnote{\texttt{toSM} assumes the target is clear from the
  calling context; it is also possible to write a wrapper function
  taking an explicit target which gets existentially reflected into
  the type.}
\ignore{ The function @toSM input@ creates a string matcher @SM
  target@ structure that matches the type level symbol target to the
  string value argument @input@.  }
\begin{code}
toSM :: forall (target :: Symbol). (KnownSymbol target)
     => RString -> SM target
toSM input = SM input (makeSMIndices input tg) where
  tg = fromString (symbolVal (Proxy :: Proxy target))

makeSMIndices
  :: x:RString -> tg:RString -> [GoodIndex x tg]
makeSMIndices x tg
  = makeIndices x tg 0 (lenStr tg - 1)
\end{code}
The input field of the result is the input string;
the indices field is computed by calling @makeIndices@
within the range of the @input@, that is from
@0@ to @lenStr input - 1@.
\begin{code}
\end{code}

We now prove that @toSM@ is a monoid morphism.
\begin{theorem}[$\texttt{toSM}$ is a Morphism]\label{theorem:smmorphism}
@toSM :: RString -> SM t@ is a morphism between the monoids
(@RString@, @stringMempty@, @stringMappend@) and (@SM t@, @mempty@, @mappend@).
\end{theorem}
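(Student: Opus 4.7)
By Definition~\ref{definition:morphism}, we must inhabit the type @Morphism RString (SM t) toSM@, which amounts to providing proof terms for the two equalities
@toSM stringMempty = mempty@ and @toSM (x stringMappend y) = toSM x mappend toSM y@.
The plan is to write each as a Haskell function checked by Liquid Haskell, using rewriting driven by the reflected definitions of @toSM@, @makeSMIndices@, @makeIndices@, and @(mappend)@ from \S~\ref{subsec:monoid:methods}, together with the assumed @RString@ properties of \S~\ref{subsec:refinedstrings}.

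The identity case is routine. Unfolding @toSM stringMempty@ yields @SM stringMempty (makeSMIndices stringMempty tg)@, so it suffices to prove a lemma @makeSMIndicesEmp :: tg:RString -> {makeSMIndices stringMempty tg = listMempty}@. Since @lenStr stringMempty = 0@, the call to @makeIndices@ is made with @lo = 0@ and @hi = -1@, and the first guard @hi < lo@ fires, returning @listMempty@ directly; no induction is needed beyond one unfolding step.

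The distribution case is the main obstacle. Unfolding both sides, we must show
@SM (x stringMappend y) (makeSMIndices (x stringMappend y) tg) = SM (x stringMappend y) (xis' ++ xyis ++ yis')@
where @xis' = map (castGoodIndexLeft tg x y) (makeSMIndices x tg)@, @xyis = makeNewIndices x y tg@, and @yis' = map (shiftStringRight tg x y) (makeSMIndices y tg)@. Equality of the @input@ fields is immediate, so the real work is a combinatorial lemma
\begin{code}
makeSMIndicesSplit
  :: tg:RString -> x:RString -> y:RString
  -> { makeSMIndices (x <+> y) tg
       =   map (castGoodIndexLeft tg x y)  (makeSMIndices x tg)
        ++ makeNewIndices x y tg
        ++ map (shiftStringRight tg x y)   (makeSMIndices y tg) }
\end{code}
My plan is to prove this by splitting the index range @[0 .. lenStr (x <+> y) - lenStr tg]@ into the three sub-ranges depicted in Figure~\ref{fig:mappend:indices}: the purely-left range @[0 .. lenStr x - lenStr tg]@, the straddling range @[max(lenStr x - lenStr tg + 1) 0 .. lenStr x - 1]@, and the shifted-right range @[lenStr x .. lenStr (x <+> y) - lenStr tg]@. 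A helper lemma @makeIndicesSplit@ about @makeIndices s tg lo hi@ over a partitioned @[lo,hi]@ interval, proved by induction on @hi - lo@, reduces the goal to three pointwise equalities. On the left range, @subStrAppendRight@ converts each good index of @x@ into the same good index of @x stringMappend y@, matching @castGoodIndexLeft@. On the right range, @subStrAppendLeft@ (together with the identity @makeIndices@ on a shifted range equals the shift of @makeIndices@) matches @shiftStringRight@ applied to good indices of @y@. The middle range is exactly the definition of @makeNewIndices@ when @lenStr tg >= 2@, and both sides are @listMempty@ when @lenStr tg < 2@ (an easy case analysis, since no match of length @<2@ can straddle the join point). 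Discharging the three pointwise equalities and composing with list associativity then gives @makeSMIndicesSplit@, which completes the morphism proof.
\qed
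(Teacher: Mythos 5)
Your proposal takes essentially the same route as the paper: the paper's \texttt{distributestoSM} also reduces the goal to the three-region decomposition of Figure~\ref{fig:mappend:indices}, using a range-splitting lemma on \texttt{makeIndices} (the paper's \texttt{mergeIndices}, your \texttt{makeIndicesSplit}), a cast-identity argument via \texttt{subStrAppendRight} (\texttt{mapCastId}), and a shift correspondence via \texttt{subStrAppendLeft} (\texttt{shiftIndicesRight}), with the straddling region identified with \texttt{makeNewIndices}. The only difference is organizational --- the paper first merges the left and straddling regions (case-splitting on the relative lengths of the target and \texttt{x}, a case analysis you should make explicit beyond your \texttt{lenStr tg < 2} check) and then appends the shifted right region, whereas you partition the full range into three at once --- but the key lemmas and inductions coincide.
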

\begin{proof}
Based on definition~\ref{definition:morphism}, proving @toSM@
is a morphism requires constructing a valid inhabitant of the type
\begin{code}
Morphism RString (SM t) toSM
  = x:RString -> y:RString
  -> {toSM stringMempty = mempty && toSM (x <+> y) = toSM x <> toSM y}
\end{code}
We define the function @distributestoSM :: Morphism RString (SM t) toSM@
to be the required valid inhabitant.

The core of the proof starts from
exploring the string matcher @toSM x <> toSM y@.
This string matcher contains three sets of indices
as illustrated in Figure~\ref{fig:mappend:indices}:
(1) @xis@ from the input @x@,
(2) @xyis@ from appending the two strings, and
(3) @yis@ from the input @y@.
We prove that appending these three groups of indices together gives
exactly the good indices of @x stringMappend y@, which are also the
value of the indices field in the result of
@toSM (x stringMappend y)@.
\begin{code}
distributestoSM x y
  =   (toSM x :: SM target) <> (toSM y :: SM target)
  ==. (SM x is1) <> (SM y is2)
  ==. SM i (xis ++ xyis ++ yis)
  ==. SM i (makeIndices i tg 0 hi1 ++ yis)
      ? (mapCastId tg x y is1 && mergeNewIndices tg x y)
  ==. SM i (makeIndices i tg 0       hi1
         ++ makeIndices i tg (hi1+1) hi)
      ? shiftIndicesRight 0 hi2 x y tg
  ==. SM i is
      ? mergeIndices i tg 0 hi1 hi
  ==. toSM (x <+> y)
  *** QED
  where
    xis  = map (castGoodIndexRight tg x y) is1
    xyis = makeNewIndices x y tg
    yis  = map (shiftStringRight   tg x y) is2
    tg   = fromString (symbolVal (Proxy::Proxy target))
    is1  = makeSMIndices x tg
    is2  = makeSMIndices y tg
    is   = makeSMIndices i tg
    i    = x <+> y
    hi1  = lenStr x - 1
    hi2  = lenStr y - 1
    hi   = lenStr i - 1
\end{code}
The most interesting lemma we use is
@mergeIndices x tg lo mid hi@
that states that for the input @x@ and the target @tg@
if we append the indices in the range  from @to@ to @mid@
with the indices in the range from @mid+1@ to @hi@,
we get exactly the indices in the range from @lo@ to @hi@.
This property is formalized in the type of the lemma.
\begin{code}
mergeIndices
 :: x:RString -> tg:RString
  -> lo:Nat -> mid:{Int | lo <= mid} -> hi:{Int | mid <= hi}
  -> {   makeIndices x tg lo hi
     =  makeIndices x tg lo mid
     ++ makeIndices x tg (mid+1) hi}
\end{code}
The proof proceeds by induction on @mid@ and using three more lemmata:
\begin{itemize}
\item @mergeNewIndices@ states that appending the indices @xis@ and @xyis@
is equivalent to the good indices of @x stringMappend y@ from @0@ to @lenStr x - 1@.
The proof case splits on the relative sizes of @tg@ and @x@
and is using @mergeIndices@ on @mid = lenStr x1 - lenStr tg@
in the case where @tg@ is smaller than @x@.
\item @mapCastId@ states that casting a list of indices returns the same list.
\item @shiftIndicesRight@ states that shifting right @i@ units the indices from @lo@ to @hi@
is equivalent to computing the indices from @i + lo@
to @i + hi@ on the string @x stringMappend y@, with @lenStr x = i@.
\end{itemize}
\qed\end{proof}

\subsection{Parallel String Matching}\label{subsec:parallel-string-matching}
We conclude this section with
the definition of a parallelized version of string matching.
We put all the theorems together to prove 
that the sequential and parallel versions always give the same result.

We define @toSMPar@ as a parallel version of @toSM@ using machinery of section~\ref{sec:parallelization}.
\begin{code}
toSMPar :: forall (target :: Symbol). (KnownSymbol target)
        => Int -> Int -> RString -> SM target
toSMPar i j = pmconcat i . pmap toSM . chunkStr j
\end{code}
First, @chunkStr@ splits the input into @j@ chunks.
Then, @pmap@ applies @toSM@ at each chunk in parallel.
Finally, @pmconat@ concatenates the mapped chunks in parallel
using @mappend@, the monoidal operation for @SM target@.

\paragraph{Correctness.}
We prove correctness of @toSMPar@ directly from
Theorem~\ref{theorem:two-level}.
\begin{theorem}[Correctness of Parallel String Matching]\label{theorem:correctness}
For each parameter @i@ and @j@, and input @x@,
@toSMPar i j x@ is always equal to @toSM x@.
\begin{code}
correctness :: i:Int -> j:Int -> x:RString
             -> {toSM x = toSMPar i j x}
\end{code}
\end{theorem}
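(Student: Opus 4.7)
The plan is to reduce this statement to a direct application of the Correctness of Parallelization Theorem~\ref{theorem:two-level}. All three ingredients that the generic parallelization theorem requires are already in place: Assumption~\ref{assumption:rstring} states that (@RString@, @stringMempty@, @stringMappend@) is a chunkable monoid; Theorem~\ref{theorem:stringmatchers} states that (@SM t@, @mempty@, @mappend@) is a monoid; and Theorem~\ref{theorem:smmorphism} provides the witness @distributestoSM@ that @toSM@ is a monoid morphism from the former to the latter.

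Concretely, I would unfold the definition of @toSMPar@ so that @toSMPar i j x = pmconcat i (pmap toSM (chunkStr j x))@, and then invoke @parallelismEquivalence toSM distributestoSM x i j@, which by Theorem~\ref{theorem:two-level} establishes @toSM x = pmconcat i (pmap toSM (chunkStr j x))@. The resulting proof term is essentially a one-line equational chain with a single hint calling the generic theorem, all the real intellectual work having been pushed to the lemmata of the preceding subsections (chunkability of @RString@, the monoid structure on @SM t@, and the morphism property of @toSM@).

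The main obstacle I anticipate is a minor typing mismatch: the statement of @correctness@ quantifies over @i:Int@ and @j:Int@, whereas Theorem~\ref{theorem:two-level} requires strictly positive parameters. I expect to handle this either by strengthening the signature to @i:Pos@ and @j:Pos@ (which matches the intended use of @toSMPar@), or, to retain the stated type, by a preliminary case split on @i <= 1@ and @j <= 1@. In each degenerate branch, @pmconcat@ and @chunkStr@ collapse to their sequential behavior, so the equality follows from a short rewrite using the monoid laws for @SM t@ and the observation that @mconcat@ of a singleton list recovers its single element, while the remaining positive branch is discharged exactly as sketched above.
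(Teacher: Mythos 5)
Your proposal matches the paper's proof exactly: the paper's \texttt{correctness} term unfolds \texttt{toSMPar} and discharges the equality with a single hint \texttt{parallelismEquivalence toSM distributestoSM x i j}, relying on Assumption~\ref{assumption:rstring}, Theorem~\ref{theorem:stringmatchers}, and the morphism witness \texttt{distributestoSM}. Your observation about the \texttt{Int} versus \texttt{Pos} mismatch is a real wrinkle that the paper silently glosses over, and either of your proposed fixes would be acceptable.
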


\begin{proof}
The proof follows by direct application of Theorem~\ref{theorem:two-level}
on the chunkable monoid (@RString@, @$\eta$@, @<+>@) (by Assumption~\ref{assumption:rstring})
and the monoid (@SM t@, @$\epsilon$@, @<>@) (by Theorem~\ref{theorem:stringmatchers}).
\begin{code}
correctness i j x
  =   toSMPar i j x
  ==. pmconcat i (pmap toSM (chunkStr j x))
  ==. toSM is
    ? parallelismEquivalence toSM distributestoSM x i j
  *** QED
\end{code}
Note that application of the theorem @parallelismEquivalence@
requires a proof that its first argument @toSM@ is a morphism.
By Theorem~\ref{theorem:monoid:distribution},
the required proof is provided as the function @distributestoSM@.
\qed\end{proof}

\ignore{

\paragraph{Time Complexity.}
Counting only string comparisons as the expensive operations,
the sequential string matcher on input @x@ runs in time
linear to @n = lenStr x@. Thus $T_\texttt{toSM}(n) = O(n)$.

We get time complexity of @toSMPar@ by the time complexity of
two-level parallel algorithms equation~\ref{eq:complexity},
with the time of string matching mappend being linear on the length
of the target @t = lenStr tg@, or
$T_\mappend(\texttt{SM})= O(t)$.
$$
T_\texttt{toSMPar} (n, t, i, j) =
O((i-1)(\frac{\log n - \log j}{\log i}) t  + \frac{n}{j})
$$
The above analysis refers to a model with infinite processor and no caching.
To compare the algorithms in practice,
we matched the target "the"
in  Oscar Wilde's "The Picture of Dorian Gray", a text of @n = 431372@ characters
using a two processor Intel Core i5.
The sequential algorithm detected 4590 indices in 40 ms.
We experimented with different parallization factors @i@ and chunk sizes @j / n@
and observed up to $50\%$ speedups of the parallel algorithm for parallelization factor
@4@ and @8@ chunks.
As a different experiment, we matched the input against its size @t = 400@ prefix,
a size comparable to the input size @n@.
For bigger targets,
mappend gets slower, as it has complexity linear to the size of target.
We observed $20\%$ speedups for @t=400@ target but also $30\%$ slow downs for various sizes of @i@ and @j@.
In all cases the indices returned by the sequential and the parallel algorithms were the same.
}

\section{Evaluation: Strengths \& Limitations}\label{sec:evaluation}

Verification of Parallel String Matching is the first realistic
proof that uses (Liquid) Haskell
to prove properties \textit{about} program functions.
In this section we use the String Matching proof
to quantitatively and qualitatively evaluate theorem proving in Haskell.

\paragraph{Quantitative Evaluation.}
The Correctness of Parallel String Matching proof
can be found online~\cite{implementation}.
Verification time, that is the time Liquid Haskell needs to check the proof,
is 75 sec on a dual-core Intel Core i5-4278U processor.
The proof consists of \textit{1839} lines of code.
Out of those
\begin{itemize}
\item \textit{226} are Haskell ``runtime'' code,
\item \textit{112} are liquid comments on the ``runtime'' Haskell code,
\item \textit{1307} are Haskell proof terms, that is functions with @Proof@ result type, and
\item \textit{194}  are liquid comments to specify theorems.
\end{itemize}
Counting both liquid comments and Haskell proof terms as verification code,
we conclude that the proof requires 7x the lines of ``runtime'' code.
This ratio is high and takes us to 2006 Coq,
when Leroy~\cite{Leroy06formalcertification} verified
the initial CompCert C compiler with
the ratio of verification to compiler lines being 6x.

\paragraph{Strengths.}
Though currently verbose,
deep verification using Liquid Haskell has many benefits.
First and foremost,
\textit{the target code is written in the general purpose Haskell}
and thus can use advanced Haskell features, including
type literals, deriving instances, inline annotations
and optimized library functions like @ByteString@.
Even diverging functions can coexist with the target code, as long
as they are not reflected into logic~\cite{Vazou14}.

Moreover, \textit{SMTs are used to automate the proofs}
over key theories like linear arithmetic and equality.
As an example, associativity of @(+)@
is assumed throughout the proofs while shifting indices.
Our proof could be further automated 
by mapping refined strings to SMT strings and  
using the automated SMT string theory.
We did not follow this approach because we want to show that
our techinique can be used to prove any (and not only domain specific)
program properties.

Finally, we get further automation via
\textit{Liquid Type Inference}~\cite{LiquidPLDI08}.
Properties about program functions,
expressed as type specifications with unit result,
often depend on program invariants,
expressed as vanilla refinement types, and vice versa.
For example, we need the invariant that all indices of
a string matcher are good indices
to prove associativity of @(mappend)@.
%
Even though Liquid Haskell cannot currently synthesize proof terms,
it performs really well at inferring and propagating program invariants (like good indices)
via the abstract interpretation framework of Liquid Types.

\paragraph{Limitations.}
There are severe limitations that should be addressed
to make theorem proving in Haskell a pleasant and usable technique.
As mentioned earlier \textit{the proofs are verbose}.
There are a few cases where the proofs require domain specific knowledge.
For example, to prove associativity of string matching
@x mappend (y mappend z) = (x mappend y) mappend z@
we need a theorem that performs case analysis on the relative length of
the input field of @y@ and the target string.
Unlike this case split though, most proofs
do not require domain specific knowledge and merely proceed
by term rewriting and structural inductuction
that should be automated
via Coq-like~\cite{coq-book} tactics or/and Dafny-like~\cite{dafny} heuristics.
For example, synquid~\cite{polikarpova16} could be used to automatically
synthesize proof terms.

Currently, we suffer from two engineering limitations.
First, all reflected function should exist in the same module,
as reflection needs access to the function implementation
that is unknown for imported functions.
This is the reason why we need to use a user defined,
instead of Haskell's built-in, list.
In our implementation we used @CPP@ as a current workaround
of the one module restriction.
Second, class methods
cannot be currently reflected.
Our current workaround is to define Haskell functions instead
of class instances.
For example (@append@, @nil@) and (@concatStr@, @emptyStr@)
define the monoid methods of List and Refined String respectively.

Overall, we believe that the strengths outweigh the limitations which
will be addressed in the near future,
rendering Haskell a powerful theorem prover.

\section{Related Work}\label{sec:related}

\paragraph{SMT-Based Verification}
SMT solvers have been extensively used to automate
reasoning on verification languages like
Dafny~\cite{dafny}, Fstar~\cite{fstar} and Why3~\cite{why3}.
These languages are designed for verification,
thus have limited support for commonly used language
features like parallelism and optimized libraries
that we use in our verified implementation.
Refinement Types~\cite{ConstableS87,FreemanPfenningDONTCITE91,Rushby98}
on the other hand, target existing general purpose languages,
such as
ML~\cite{pfenningxi98,GordonRefinement09,LiquidPLDI08},
C~\cite{deputy,LiquidPOPL10},
Haskell~\cite{Vazou14},
Racket~\cite{RefinedRacket}
and Scala~\cite{refinedscala}.
However, before Refinement Reflection~\cite{reflection} was introduced,
they only allowed ``shallow'' program specifications,
that is, properties that only talk about behaviors of program functions
but not functions themselves.

\paragraph{Dependent Types}
Unlike Refinement Types, dependent type systems,
like Coq~\cite{coq-book}, Adga~\cite{agda} and Isabelle/HOL~\cite{isabelle} allow for ``deep'' specifications
which talk about program functions,
such as the program equivalence reasoning we presented.
Compared to (Liquid) Haskell,
these systems allow for tactics and heuristics
that automate proof term generation
but lack SMT automations and
general-purpose language features,
like non-termination, exceptions and IO.
Zombie~\cite{Zombie,Sjoberg2015} and Fstar~\cite{fstar} allow dependent types to
coexist with divergent and effectful programs,
but still lack the optimized libraries,
like @ByteSting@, which come
with a general purpose language
with long history, like Haskell.

\paragraph{Parallel Code Verification}
Dependent type theorem provers have been used before to
verify parallel code.
BSP-Why~\cite{bspwhy} is an extension to Why2 that is using both Coq and SMTs
to discharge user specified verification conditions.
Daum~\cite{daum07} used Isabelle to formalize the semantics
of a type-safe subset of C, 
by extending Schirmer's~\cite{schirmer06}
formalization of sequential imperative languages.
Finally, Swierstra~\cite{wouter10} formalized mutable arrays in Agda
and used them to reason about distributed maps and sums.

One work  closely related to ours is
SyDPaCC~\cite{SyDPaCC}, a Coq library that
automatically parallelizes list homomorphisms
by extracting parallel Ocaml versions of user provided Coq functions.
Unlike SyDPaCC, we are not automatically generating the parallel
function version, because of engineering limitations
(\S~\ref{sec:evaluation}).  Once these are addressed, code extraction
can be naturally implemented by turning
Theorem~\ref{theorem:two-level} into a Haskell type class with a
default parallelization method.
SyDPaCC used maximum prefix sum as a case study,
whose morphism verification is
much simpler than our string matching case study.
Finally, our implementation consists of
2K lines of Liquid Haskell, which we consider verbose and aim to
use tactics to simplify.
On the contrary, the SyDPaCC implementation
requires three different languages:
2K lines of Coq with tactics, 600 lines of Ocaml and 120 lines of C,
and is considered ``very concise''.

\section{Conclusion}\label{sec:conclusion}
We made the first non-trivial use of (Liquid) Haskell as a proof
assistant. 
We proved the parallelization of chunkable monoid
morphisms to be correct
and applied our parallelization technique to string matching,
resulting in a formally verified parallel string matcher.
Our proof uses refinement types to specify
equivalence theorems,
Haskell terms to express proofs,
and Liquid Haskell to check that the terms prove the theorems.
Based on our 1839LoC sophisticated proof we conclude that
Haskell can be successfully used as a theorem prover
to prove arbitrary theorems about real Haskell code
using SMT solvers to automate proofs
over key theories like linear arithmetic and equality.
However, Coq-like tactics or Dafny-like heurestics are required
to ease the user from manual proof term generation.

{
\bibliographystyle{plain}
\bibliography{sw}
}

\end{document}